\documentclass[twoside]{article}
\PassOptionsToPackage{compress}{natbib}


\usepackage[accepted]{aistats2023}




\usepackage[utf8]{inputenc} 
\usepackage[T1]{fontenc}    
\usepackage{hyperref}       
\usepackage{url}            
\usepackage{booktabs}       
\usepackage{amsmath,amsfonts}       
\usepackage{nicefrac}       
\usepackage{microtype}      
\usepackage{xcolor}         
\usepackage{adjustbox}
\usepackage{diagbox}
\usepackage{wrapfig,lipsum,booktabs}
\usepackage{arydshln}
\usepackage{subcaption}
\usepackage{lineno}
\usepackage{natbib}
\setcitestyle{authoryear,open={(},close={)}}


%

  

\usepackage{amsmath,amsthm,amsfonts,amssymb}
\usepackage{bm}
\usepackage{graphicx}
\usepackage{sidecap}
\usepackage{mathrsfs}
\usepackage{multirow, multicol}
\usepackage{caption}

\usepackage{wrapfig}
\usepackage{booktabs}

\usepackage{algorithm}
\usepackage{algorithmic}
\usepackage{xcolor}
\usepackage{amsmath,amsfonts,bm}
\usepackage{xspace}
\newtheorem{theorem}{Theorem}[section]
\newtheorem{definition}{Definition}[section]
\newtheorem{lemma}[theorem]{Lemma}
\newtheorem{corollary}{Corollary}[theorem]


\newcommand{\MI}{\mathcal{I}}

\newcommand{\TC}{\mathcal{TC}}
\newcommand{\TCeld}{\widehat{\mathcal{TC}}}

\newcommand{\Enpy}{\text{H}}

\newcommand{\bbR}{\mathbb{R}}

\newcommand{\bbE}{\mathbb{E}}

\newcommand{\bbN}{\mathbb{N}}

\newcommand{\bbP}{\mathbb{P}}
\newcommand{\bbQ}{\mathbb{Q}}

\newcommand{\calA}{\mathcal{A}}
\newcommand{\calB}{\mathcal{B}}

\newcommand{\calD}{\mathcal{D}}
\newcommand{\calE}{\mathcal{E}}
\newcommand{\calF}{\mathcal{F}}

\newcommand{\calH}{\mathcal{H}}

\newcommand{\calL}{\mathcal{L}}

\newcommand{\calN}{\mathcal{N}}






\def\Figref#1{Figure~\ref{#1}}


\def\Secref#1{Section~\ref{#1}}











\def\1{\bm{1}}








\def\vu{{\bm{u}}}
\def\vv{{\bm{v}}}

\def\vx{{\bm{x}}}
\def\vy{{\bm{y}}}
\def\vz{{\bm{z}}}

\def\vX{{\bm{X}}}




\def\mX{{\bm{X}}}

\DeclareMathAlphabet{\mathsfit}{\encodingdefault}{\sfdefault}{m}{sl}
\SetMathAlphabet{\mathsfit}{bold}{\encodingdefault}{\sfdefault}{bx}{n}




\def\sI{{\mathbb{I}}}








\newcommand{\E}{\mathbb{E}}






\def\Figref#1{Figure~\ref{#1}}
\def\Tableref#1{Table~\ref{#1}}


\def\Secref#1{Section~\ref{#1}}


\def\eqref#1{equation~\ref{#1}}










\usepackage{amssymb}
\usepackage{pifont}

\def\dis{{\mathcal{H}}}
\def\det{\text{Det}}

\begin{document}

\twocolumn[

\aistatstitle{Estimating Total Correlation with Mutual Information Estimators}


\aistatsauthor{ Ke Bai$^*$ \And Pengyu Cheng$^*$ \And  Weituo Hao \And  Ricardo Henao \And Lawrence Carin}

\aistatsaddress{ Duke University \And  Tencent AI Lab \And ByteDance  \And KAUST \And  KAUST } ]

\begin{abstract}
\vspace{-3mm}
Total correlation (TC) is a fundamental concept in information theory which measures statistical dependency among multiple random variables. Recently, TC has shown noticeable effectiveness as a regularizer in many learning tasks, where the correlation among multiple latent embeddings requires to be jointly minimized or maximized. However, calculating precise TC values is challenging, especially when the closed-form distributions of embedding variables are unknown. In this paper, we introduce a unified framework to estimate total correlation values with sample-based mutual information (MI) estimators. More specifically, we 
discover a relation between TC and MI and propose two types of calculation paths (tree-like and line-like) to decompose TC into MI terms. With each MI term being bounded, the TC values can be successfully estimated. Further, we provide theoretical analyses concerning the statistical consistency of the proposed TC estimators. Experiments are presented on both synthetic and real-world scenarios, where our estimators demonstrate effectiveness in all TC estimation, minimization, and maximization tasks.  The code is available at \textit{https://github.com/Linear95/TC-estimation}.
\end{abstract}

\vspace{-5mm}
\section{INTRODUCTION}
\vspace{-2.mm}
Statistical dependency measures the association (correlation) of variables or factors in models and systems, and constitutes one of the key considerations in various scientific domains including  statistics~\citep{granger1994using,jiang2015nonparametric},  robotics~\citep{julian2014mutual,charrow2015information}, bioinformatics~\citep{lachmann2016aracne,zea2016mitos}, and machine learning~\citep{chen2016infogan,alemi2016deep,hjelm2018learning}. In deep learning, statistical dependency has been applied as learning objective or regularizer in many well-known training frameworks, such as information bottleneck~\citep{alemi2016deep}, disentangled representation learning~\citep{chen2018isolating,pmlr-v97-peng19b,cheng2022replacing} and contrastive learning~\citep{chen2020simple,gao2021simcse}. Recent neural network studies have also demonstrated the benefits of considering statistical dependency in terms of model robustness~\citep{zhu2020learning}, fairness~\citep{pmlr-v97-creager19a}, interpretability~\citep{chen2016infogan,cheng2020improving},~\textit{etc}.

Among diverse measurement approaches for statistical dependency, the concept of mutual information (MI) is one of the most commonly used, especially in deep model training~\citep{alemi2016deep,belghazi2018mutual,chen2020simple}. Given two random variables $\vx$ and $\vy$ with joint and marginal distributions $p(\vx, \vy)$, $p(\vx)$ and $p(\vy)$, respectively, their mutual information is defined as:
\begin{equation}\label{eq:mi-definition} 
    \MI(\vx; \vy) = \bbE_{p(\vx, \vy)} \Big[\log \frac{p(\vx, \vy)}{p(\vx) p(\vy)} \Big].
\end{equation}
Recently, mutual information has been used as a training criterion to deliver noticeable performance improvement for deep models on various learning tasks such as conditional generation~\citep{chen2016infogan,cheng2020improving}, domain adaptation~\citep{gholami2020unsupervised,cheng2020club}, representation learning~\citep{chen2020simple,gao2021simcse,yuan2020improving}, and model debiasing~\citep{song2019learning,cheng2020fairfil}.
However, standard MI in \eqref{eq:mi-definition} only handles the statistical dependency between two variables. When considering correlation among multiple variables, MI requires computation between each pair of variable, which leads to a quadratic scaling in computation cost. To address this problem, total correlation (TC)~\citep{watanabe1960information} or multi-information~\citep{studeny1998multiinformation} has been proposed for  multi-variable scenarios:
\begin{align}
    \TC(\mX) = &\TC(\vx_1,\vx_2,\dots,\vx_n) \label{eq:tc-definition} \\
     =&  \bbE_{p(\vx_1, \vx_2, \dots, \vx_n)} \Big[\log \frac{p(\vx_1, \vx_2, \dots, \vx_n)}{p(\vx_1) p(\vx_2)\dots p(\vx_n) }\Big].\nonumber
\end{align}
TC has been also proven effective to enhance machine learning models in many tasks such as independent component analysis~\citep{cardoso2003dependence}, structure discovery~\citep{ver2014discovering} and disentangled representation learning~\citep{chen2018isolating,locatello2019fairness,kim2018disentangling}. However, TC suffers from the same practical problem as MI, namely, that the exact values of TC are difficult to calculate without the availability of the closed-form distributions, or only relying on samples either via the kernel density estimation (KDE)~\citep{kandasamy2015nonparametric, singh2014exponential} and \textit{k}-Nearest Neighbor (\textit{k}-NN)~\citep{pal2010estimation, Kraskov2004EstimatingMI, gao2018demystifying}, which may perform well on data with low dimensionality but failed on the high one. Furthermore, previous works on disentangled representation learning~\citep{chen2018isolating,gao2019auto} strongly assume that both the joint distribution $p(\vx_1,\vx_2,\dots,\vx_n)$ and marginal distributions $\{p(\vx_i)\}_{i=1}^n$ follow Gaussian distributions, so that the TC value can be calculated in closed-form. \citet{poole2019variational} proposed an upper bound of  TC by further introducing an auxiliary variable $\vy$. With a strong assumption that given $\vy$ and $\vx_i|\vy$, all variables are conditionally independent, $p(\mX|\vy) = \prod_{i=1}^n p(\vx_i|\vy)$,
\citet{poole2019variational} showed that $\TC(\mX) = \sum_{i=1}^n \MI(\vx_i; \vy) - \MI(\mX; \vy)$, \emph{i.e.}, that the TC value can be bounded by MI estimators~\citep{belghazi2018mutual,poole2019variational,cheng2020club}. All these TC estimation methods require additional strong assumptions on the distributions of data samples, which heavily limits their application in practice.

In this paper, we propose a new  \textbf{T}otal \textbf{C}orrelation \textbf{E}stimator via \textbf{L}inear \textbf{D}ecomposition (TCeld), which importantly, does not require any assumptions about sample distributions. More specifically, we discover a linear decomposition relation between TC and mutual information (MI). Based on this relationship, we linearly split the TC into several MI terms, then estimate each MI term aided with variational MI estimators. With two different TC decomposition paths (\textit{line-like} and \textit{tree-like}), we obtain two types of TC estimators, $\widehat{\TC}_\text{Tree}$ and $\widehat{\TC}_{\text{Line}}$, respectively. Moreover, we prove that the nice properties of MI estimators, such as consistency, are maintained in the corresponding TC estimators, thanks to the linearity of the proposed TC decomposition.
%
%
In the experiments, we first test the estimation quality of our TC estimators on simulation data sampled from multi-variate Gaussian distributions, then apply the TC estimators to two real-world TC optimization tasks. The numerical results demonstrate the effectiveness of the proposed TC estimators under both TC estimation and optimization scenarios. 

\vspace{-3mm}
\section{BACKGROUND}
\label{sec:background}
\vspace{-2mm}
\subsection{Sample-based Mutual Information Estimators}\label{sec:background-mi}
\vspace{-2mm}
Although mutual information (MI) is a fundamental tool for measuring statistical dependency between two variables, calculating MI values with only samples provided is challenging, especially when the closed-form distributions of variables are unknown.  To estimate MI values with samples, several variational MI estimators have been introduced. \citet{barber2003algorithm} approximate the conditional distribution $p(\vx|\vy)$ between $\vx$ and $\vy$ by a variational distribution $q(\vx|\vy)$, and derive:
\vspace{-2mm}
\begin{equation}\label{eq:ba-boud}
    \MI_{\text{BA}} := \Enpy (\vx) + \bbE_{p(\vx,\vy)}[\log q(\vx|\vy)],
\end{equation}
with $\Enpy(\vx)$ as the entropy of $\vx$.
 Utilizing the Donsker-Varadhan representation~\citep{donsker1983asymptotic} of KL-divergence, \citet{belghazi2018mutual} obtain an MI Neural Estimator (MINE) with a score network $\phi(\cdot, \cdot)$:
%
%
\begin{align}
    \MI_{\text{MINE}} :=& \bbE_{p(\vx,\vy)}[\phi(\vx,\vy)] \label{eq:mine} 
    \\ &- \log(\bbE_{p(\vx)p(\vy)}[\exp({\phi(\vx,\vy)})]. \nonumber
\end{align}
Nguyen, Wainwright, and Jordan
(NWJ)  derive another lower bound considering MI as in a f-divergence form~\cite{nguyen2010estimating}, which also requires a score network $\phi(\cdot, \cdot)$:
\begin{equation}
  \MI_{\text{NWJ}}:=\bbE_{p(\vx,\vy)} [\phi(\vx,\vy)] - \bbE_{p(\vx)p(\vy)} [ e^{\phi(\vx, \vy)-1}]. \label{eq:NWJ}
\end{equation}
With Noise Contrastive Estimation (NCE)~\citep{gutmann2010noise}, \citet{oord2018representation} propose a MI lower bound called InfoNCE, based on a group of  samples $\{(\vx_i, \vy_i)\}_{i=1}^N$, to obtain a low-variance estimator:
\begin{equation}
    \MI_\text{InfoNCE} := \bbE [\frac{1}{N} \sum_{i=1}^N \log \frac{\exp({\phi(\vx_i, \vy_i)})}{\frac{1}{N}\sum_{j=1}^N \exp({\phi(\vx_i, \vy_j)})}] .\label{eq:infonce}
\end{equation}
Different from the MI lower bounds introduced above, \citet{cheng2020club} derive a Contrastive Log-ratio Upper Bound (CLUB), which is also based on a variational approximation $q(\vx|\vy)$ of $p(\vx|\vy)$:
\begin{equation}
    \MI_{\text{CLUB}} := \bbE_{p(\vx,\vy)}[\log q(\vx|\vy)] \label{eq:club}- \bbE_{p(\vx)p(\vy)}[\log q(\vx|\vy)]
\end{equation}
%
%

\vspace{-3mm}
\subsection{Statistical Properties of Estimators}\label{sec:background-prop}
\vspace{-2mm}
Given observed samples $\vx^{1}, \vx^{2}, \dots \vx^{m}  \sim p(\vx)$, a statistic is defined as $T(\vx^{1}, \vx^{2}, \dots, \vx^{m})$, where $T(\cdot)$ is an arbitrary real-valued function taking samples $\vx^{1}, \vx^{2}, \dots \vx^{m}$ as inputs~\citep{degroot2012probability}.  Sample-based estimators to calculate statistical dependency, \textit{i.e.} mutual information and total correlation, are also examples of statistics.  To evaluate the performance of a statistic,  \textit{statistical properties} are introduced to describe behaviors of the statistic under different data situations~\citep{degroot2012probability} (\textit{e.g.}, with large/small sample number, with/without outlier).  
For estimation of mutual information and total correlation,  we follow prior works~\citep{paninski2003estimation,belghazi2018mutual} and will mainly consider the following three key properties:
\begin{definition}[Unbiasedness]\label{def:unbiasedness}
An estimator $\hat{R} = T(\vx^{1}, \vx^{2}, \dots, \vx^{m})$ is unbiased of the ground true value $R_p$ with respect to the distribution $p(\vx)$, if $\bbE_{p(\vx)}[\hat{R}] = R_p$.
\end{definition}
\begin{definition}[Consistency]\label{def:consistency}
An estimator $\hat{R}_m = T(\vx^{1}, \vx^{2}, \dots, \vx^{m})$ is consistent to the ground true value $R_p$ with respect to the distribution $p(\vx)$, if  $\forall \varepsilon>0$,  $\lim_{m \rightarrow \infty} \bbP\{| \hat{R}_m - R_p | \geq \varepsilon \} = 0$.
\end{definition}
\begin{definition}[Strong Consistency]\label{def:strong-consistency}
An estimator $\hat{R}_m = T(\vx^{1}, \vx^{2}, \dots, \vx^{m})$ is strongly consistent to the ground true value $R_p$ with respect to the distribution $p(\vx)$, if $\forall \varepsilon>0$, there exists an integer $M>0$ such that $\forall m > M$, $| \hat{R}_m - R_p | \leq \varepsilon$ almost surely. 
\end{definition}

\vspace{-2mm}
\section{TOTAL CORRELATION ESTIMATION} 
\vspace{-2mm}
Suppose $m$ groups of data samples $\{\mX^{j}\}_{j=1}^m = \{(\vx^{j}_1, \vx^{j}_2, \dots, \vx^{j}_n)\}_{j=1}^m$ are observed from an unknown distribution $p(\mX) = p(\vx_1, \vx_2, \dots, \vx_n)$. We seek to estimate the total correlation (TC) of these $n$ variables as in \eqref{eq:tc-definition}. Below we first describe the proposed sample-based TC estimators, then analyze their statistical properties.
\vspace{-1mm}
\subsection{Sample-based TC Estimators}
\vspace{-1mm}
With the definition of total correlation (TC) and mutual information (MI) in \eqref{eq:tc-definition} and \eqref{eq:mi-definition}, we discover a connection between TC and MI and summarize it in the following Theorem~\ref{thm:general_connection}.
%
\begin{theorem}\label{thm:general_connection}
Let $\mX=(\vx_1,\vx_2,\dots,\vx_n)$ be a group of random variables. Suppose set $\calA = \{i_1, i_2, \dots, i_k\} \subseteq \{1,2,\dots,n\}$ is an index subgroup.  $\bar{\calA}= \{i:i\notin\calA\}$ is the complementary set of $\calA$. Denote $\mX_{\calA} = (\vx_{i_1}, \vx_{i_2}, \dots, \vx_{i_k})$ as the selected variables from $\mX$ with the indexes $\calA$. Then we have $\TC(\mX) = \TC(\mX_{\calA}) + \TC(\mX_{\bar{\calA}}) + \MI(\mX_{\calA}; \mX_{\bar{\calA}})$.
\end{theorem}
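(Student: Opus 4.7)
The plan is to prove the identity by direct computation: expand each of the three terms on the right-hand side using the definitions in \eqref{eq:tc-definition} and \eqref{eq:mi-definition}, then sum and observe that almost everything cancels inside the logarithm, leaving exactly the integrand of $\TC(\mX)$.

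First I would note that since each of $\mX_\calA$, $\mX_{\bar\calA}$, and $(\mX_\calA, \mX_{\bar\calA}) = \mX$ are deterministic functions of the full sample, every expectation in question can be rewritten as an expectation under $p(\mX)$ by pushing forward the marginals; this lets me put all three terms on a common probability measure without any extra assumptions. Concretely, I would write
\begin{align*}
\TC(\mX_\calA) &= \bbE_{p(\mX)}\!\left[\log \frac{p(\mX_\calA)}{\prod_{i\in\calA} p(\vx_i)}\right], \\
\TC(\mX_{\bar\calA}) &= \bbE_{p(\mX)}\!\left[\log \frac{p(\mX_{\bar\calA})}{\prod_{j\in\bar\calA} p(\vx_j)}\right], \\
\MI(\mX_\calA; \mX_{\bar\calA}) &= \bbE_{p(\mX)}\!\left[\log \frac{p(\mX)}{p(\mX_\calA)\, p(\mX_{\bar\calA})}\right],
\end{align*}
where the last line uses $p(\mX_\calA, \mX_{\bar\calA}) = p(\mX)$ because $\calA \cup \bar\calA = \{1,\dots,n\}$.

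Next I would add the three right-hand sides inside a single expectation and combine the three logarithms. The factors $p(\mX_\calA)$ and $p(\mX_{\bar\calA})$ each appear once in a numerator (from the first two terms) and once in a denominator (from the mutual information term), so they cancel; the remaining denominator is $\prod_{i\in\calA}p(\vx_i)\cdot\prod_{j\in\bar\calA}p(\vx_j) = \prod_{k=1}^{n} p(\vx_k)$, and the numerator is $p(\mX)$. The expression is therefore exactly $\bbE_{p(\mX)}[\log p(\mX)/\prod_k p(\vx_k)] = \TC(\mX)$.

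There is no real obstacle: the argument is a one-line bookkeeping exercise in the $\log$ ratio, and the only subtlety worth stating explicitly is the measure-change step that unifies the three expectations under $p(\mX)$. I would close with a remark that this identity is what enables both recursive decompositions in \figref{fig:calculation_path}: iterating it with $|\calA|=\lceil n/2 \rceil$ yields the tree-like path, while iterating with $|\calA|=n-1$ (peeling off one variable at a time) yields the line-like path, so Theorem~\ref{thm:general_connection} is the single structural fact underlying both estimators.
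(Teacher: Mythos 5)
Your proposal is correct and is essentially the same argument as the paper's own proof: both split the log ratio $\log\bigl(p(\mX)/\prod_k p(\vx_k)\bigr)$ into the three factors corresponding to $\TC(\mX_{\calA})$, $\TC(\mX_{\bar{\calA}})$, and $\MI(\mX_{\calA};\mX_{\bar{\calA}})$ under the single measure $p(\mX)$, differing only in that you assemble the right-hand side while the paper decomposes the left-hand side. Your explicit remark about the push-forward of marginals is a nice touch the paper leaves implicit, but the substance is identical.
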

Theorem~\ref{thm:general_connection} underscores the insight that the TC of a group of variables $\mX$ can be decomposed into the TC of two subgroups $\TC(\mX_{\calA})$ and $\TC(\mX_{\bar{\calA}})$ and the MI between the two subgroups $\MI(\mX_{\calA}; \mX_{\bar{\calA}})$. Therefore, we can recursively represent the TC of the subgroups in terms of MI terms for lower-level subgroups. With this decomposition strategy, we can effectively cast $\TC(\mX)$ into a summation of MI terms between variable subgroups. 
Note that the decomposition form depends on the separation of variables in each subgroup. In the following, we introduce two types of variable separation strategies: \textit{line-like} and \textit{tree-like} decomposition as shown in Figure~\ref{fig:calculation_path}.

\begin{figure*}[t]
    \centering
    \includegraphics[width=0.9\textwidth]{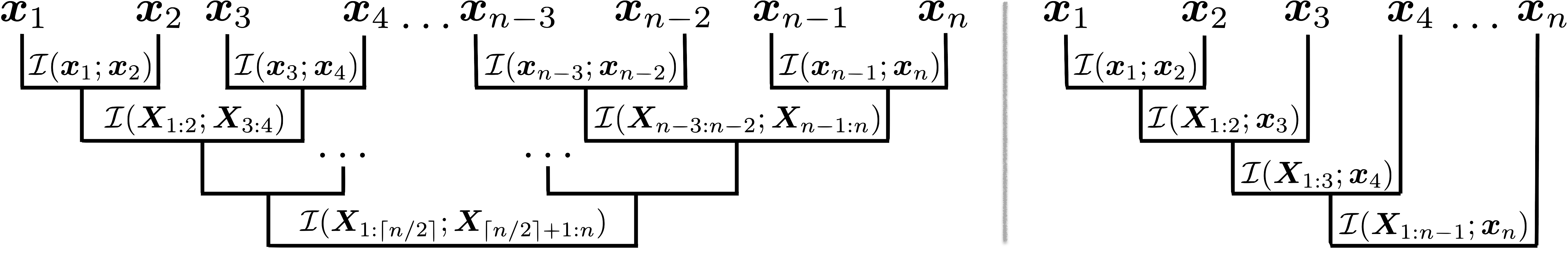}
    \vspace{-2mm}
    \caption{Two decomposition paths of total correlation $\TC(\vx_1,\vx_2,\dots,\vx_n)$. \textbf{Left} (tree-like decomposition): Divide the variables in a group into two subgroups with similar sizes. Calculate the MI between the subgroups and recursively calculate the TC of both subgroups. $\lceil n/2\rceil$ is the smallest number larger than $n/2$. \textbf{Right} (line-like decomposition): Calculate the MI between the current group of variables and the next variable, and then add the next variable into current group.}
    \label{fig:calculation_path}
     \vspace{-2mm}
\end{figure*}

\textbf{Line-like Decomposition } 
In each variable subgroup separation, our line-like decomposition strategy splits out a single variable.
Let $\mX_{i:j} = (\vx_i, \vx_{i+1}, \dots, \vx_{j})$ denote  a subset of variables with indexes from $i$ to $j$. Then we can extend  Theorem~\ref{thm:general_connection} to the following Corollary~\ref{thm:one-variable-split} and Corollary~\ref{thm:line-like}. 
Based on Corollary~\ref{thm:one-variable-split}, the line-like decomposition can be dynamically described as:
\begin{equation}\label{eq:line-like}
    \TC(\mX_{1:i+1}) = \TC(\mX_{1:i}) + \MI(\mX_{1:i}; \vx_{i+1}). 
\end{equation}
Iteratively applying \eqref{eq:line-like} to the remaining TC term, we derive the representation of $\TC(\mX)$ as the summation of MI terms in Corollary~\ref{thm:line-like}. With a given MI estimation method $\hat{\MI}$ applied to each MI term in Corollary~\ref{thm:line-like}, our line-like TC estimator can be calculated as:
\begin{equation}
    \widehat{\TC}_\text{Line}[\hat{\MI}](\mX) = \sum_{i=1}^{n-1} \hat{\MI}(\mX_{1:i} ; \vx_{i+1}) .
\end{equation}
\begin{corollary}\label{thm:one-variable-split}
Given the group $\mX$ and another variable  $\vy$, $ \TC(\mX \cup \{\vy\}) = \TC(\mX) + \MI(\mX; \vy).$
\end{corollary}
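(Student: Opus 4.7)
The plan is to apply Theorem~\ref{thm:general_connection} directly to the enlarged variable collection $\mX \cup \{\vy\}$, choosing the index partition so that one side is all of $\mX$ and the other side is the singleton $\{\vy\}$. That is, I treat $\mX \cup \{\vy\}$ as an $(n+1)$-tuple of variables indexed by $\{1,2,\dots,n,n+1\}$, pick $\calA = \{1,\dots,n\}$ (so $\mX_{\calA} = \mX$), and $\bar{\calA} = \{n+1\}$ (so $\mX_{\bar{\calA}} = \vy$). Theorem~\ref{thm:general_connection} then yields
\begin{equation*}
\TC(\mX \cup \{\vy\}) = \TC(\mX) + \TC(\vy) + \MI(\mX; \vy).
\end{equation*}

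The remaining step is to show that the total correlation of a single random variable vanishes, i.e.\ $\TC(\vy) = 0$. This is immediate from the definition in \eqref{eq:tc-definition}: when there is only one variable, the joint distribution and the product of marginals coincide, so the integrand $\log(p(\vy)/p(\vy)) = 0$ almost everywhere, and hence the expectation is zero. Substituting back gives the claimed identity $\TC(\mX \cup \{\vy\}) = \TC(\mX) + \MI(\mX; \vy)$.

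There is no real obstacle here; the corollary is essentially a renaming of Theorem~\ref{thm:general_connection} in the special case $|\bar{\calA}| = 1$. The only thing worth being careful about is the convention that $\TC$ of a singleton is defined and equals zero, which I address by direct appeal to the defining formula rather than to any auxiliary property. If desired, this same argument extends inductively to produce the line-like calculation path of Figure~\ref{fig:calculation_path} by repeatedly absorbing one variable at a time.
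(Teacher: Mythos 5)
Your proof is correct and follows exactly the route the paper intends: the paper states this corollary without a separate proof, treating it as the immediate specialization of Theorem~\ref{thm:general_connection} to a singleton complement, and you supply precisely that argument, including the one detail worth making explicit --- that $\TC$ of a single variable is zero by the defining formula.
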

\begin{corollary}\label{thm:line-like}
Given $\mX = (\vx_1, \vx_2, \dots, \vx_n)$, we have $\TC(\mX) =  \sum_{i=1}^{n-1} \MI(\mX_{1:i} ; \vx_{i+1}).$
\end{corollary}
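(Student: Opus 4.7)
The plan is to derive this corollary by induction on $n$, using the previous corollary (call it Corollary 1.1) which states $\TC(\mX \cup \{\vy\}) = \TC(\mX) + \MI(\mX; \vy)$ as the inductive step. The line-like formula is precisely what one obtains by repeatedly ``peeling off'' the last variable.

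For the base case $n=1$, both sides equal zero: the sum is empty, and by definition the total correlation of a single variable reduces to $\bbE[\log(p(\vx_1)/p(\vx_1))] = 0$. (One could alternatively take $n=2$ as the base case, where Corollary~1.1 applied to $\mX = \{\vx_1\}$, $\vy = \vx_2$ directly gives $\TC(\vx_1,\vx_2) = \MI(\vx_1;\vx_2)$, matching the single term on the right-hand side.)

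For the inductive step, assume the identity holds for $n-1$ variables. I would apply Corollary~1.1 with the group $\mX_{1:n-1} = (\vx_1,\dots,\vx_{n-1})$ and the additional variable $\vy = \vx_n$ to obtain
\begin{equation*}
\TC(\mX_{1:n}) = \TC(\mX_{1:n-1}) + \MI(\mX_{1:n-1}; \vx_n).
\end{equation*}
By the inductive hypothesis, $\TC(\mX_{1:n-1}) = \sum_{i=1}^{n-2} \MI(\mX_{1:i}; \vx_{i+1})$. Substituting and absorbing the $\MI(\mX_{1:n-1};\vx_n)$ term as the $i=n-1$ summand yields $\TC(\mX_{1:n}) = \sum_{i=1}^{n-1} \MI(\mX_{1:i};\vx_{i+1})$, completing the induction.

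There is no real obstacle here since the heavy lifting has been done in Theorem~\ref{thm:general_connection} and its first corollary; the only mild subtlety is being explicit that Corollary~1.1 itself is the specialization of Theorem~\ref{thm:general_connection} to $\calA = \{1,\dots,n-1\}$, $\bar{\calA} = \{n\}$, which uses $\TC(\{\vy\}) = 0$. Once that is invoked, the induction is mechanical.
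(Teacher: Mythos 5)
Your proof is correct and follows essentially the same route as the paper: the paper peels off $\vx_n$ to get $\TC(\mX_{1:n}) = \TC(\mX_{1:n-1}) + \MI(\mX_{1:n-1};\vx_n)$ (deriving this step directly from the definition rather than citing the first corollary) and then iterates with a ``similarly,'' which is exactly the induction you make explicit.
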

\textbf{Tree-like Decomposition } In each variable subgroup separation, the tree-like decomposition strategy separates variables $\mX_{i:j}$ into balanced variable subgroups with similar sizes in the following way:
\begin{align}
\TC(\mX_{i:j}) =& \TC(\mX_{i:\lfloor {(i+j)}/{2} \rfloor}) + \TC(\mX_{\lfloor {(i+j)}/{2} \rfloor+1:j}) \nonumber \\ &+ \MI(\mX_{i:\lfloor {(i+j)}/{2} \rfloor};\mX_{\lfloor {(i+j)}/{2} \rfloor+1:j}),  
\end{align}
where $\lfloor t\rfloor$ indicates the largest integer smaller than $t$. In accordance with the line-like decomposition, iteratively applying this dichotomous dynamic will finally convert $\TC(\mX)$ into the summation of MI terms. Since the closed-form of this tree-like TC estimator is hard to summarize in an equation, we describe it recursively in Algorithm~\ref{alg:tc-tree-like-decompostion}.

We call this novel TC estimator as  \textbf{T}otal \textbf{C}orrelation \textbf{E}stimation with \textbf{L}inear \textbf{D}ecomposition (TCeld). From the linearity of the above two decomposition strategies, one can easily derive:
\begin{theorem}\label{thm:TC-MI-upper-lower-bounds}
 If an MI estimator $\hat{\MI}$ is an MI upper (or lower) bound, then the corresponding $\widehat{\TC}_\text{Line}[\hat{\MI}]$ and $\widehat{\TC}_\text{Tree}[\hat{\MI}]$ are both the TC upper (or lower) bounds.
\end{theorem}
Therefore, by selecting an MI lower bound $\hat{\MI}_0$ and an MI upper bound $\hat{\MI}_1$, we can limit the ground-truth TC value in a certain range, $\widehat{\TC}^*[\hat{\MI_0}](\mX) \leq \TC(\mX) \leq \widehat{\TC}^*[\hat{\MI_1}](\mX)$, where $\widehat{\TC}^*$ can be either line-like $\widehat{\TC}_\text{Line}$ or tree-like $\widehat{\TC}_\text{Tree}$.

Both the line-like and tree-like TC estimators have no additional requirement on the selection of MI estimators. However, the statistical performance of the proposed TC estimators highly depends on the selected MI estimators. To further analyze the influence of MI bounds choice for TC estimators, we discuss the statistical properties between the TC estimators and MI estimators in the next subsection.    


%

\begin{algorithm}[t]
\begin{algorithmic}
\STATE \textbf{Prerequisite:} MI estimation method $\hat{\MI}$, samples $\{\mX_{1:n}^{j} \}_{j=1}^m =\{(\vx^{j}_1, \vx^{j}_2, \dots, \vx^{j}_n )\}_{j=1}^m$
 \STATE \textbf{Function} $\widehat{\TC}_\text{Tree}[\hat{\MI}](\mX_{i:j})$\textbf{:}
 \IF {$j-i < 1$}
     \RETURN 0
 \ELSE
    \STATE $m =\lfloor {(i+j)}/{2} \rfloor $
    \RETURN $\widehat{\TC}_\text{Tree}[\hat{\MI}](\mX_{i:m}) +  \widehat{\TC}_\text{Tree}[\hat{\MI}](\mX_{m+1:j}) + \hat{\MI}(\mX_{i:m};\mX_{m+1:j})$
 \ENDIF
\end{algorithmic}
 \caption{Tree-like TC estimator $\widehat{\TC}_\text{Tree}[\hat{\MI}]$ calculation.} \label{alg:tc-tree-like-decompostion}
\end{algorithm}
 
 \vspace{-1.5mm}
\subsection{Statistical Properties of TC Estimators}
\vspace{-1.5mm}
As introduced in Section~\ref{sec:background-prop}, TC estimators can be also regarded as a type of statistics based on observed groups of sample data. Therefore, the aforementioned statistical properties (unbiasedness, consistency, and strong consistency in Section~\ref{sec:background-prop}) are applicable to TC estimators.
Thanks to the form of linear combinations of MI terms in our TC estimators, we find the following relations between TC and MI estimators in terms of statistical properties: 
\begin{theorem}\label{thm:linear-prop-unbiased}
 If an MI estimator $\hat{\MI}$ is unbiased, then the corresponding TC estimators $\widehat{\TC}_{\text{Line}}[\hat{\MI}]$ and $\widehat{\TC}_{\text{Tree}}[\hat{\MI}]$ are both unbiased.
\end{theorem}
\begin{theorem}\label{thm:linear-prop-consistent}
 If an MI estimator $\hat{\MI}$ is (strongly) consistent, then the corresponding TC estimators $\widehat{\TC}_{\text{Line}}[\hat{\MI}]$ and $\widehat{\TC}_{\text{Tree}}[\hat{\MI}]$ are both (strongly) consistent.
\end{theorem}
The definitions of \textit{unbiasedness} and \textit{consistency} are introduced in Section~\ref{sec:background-prop}. The details of the proofs for both theorems are shown in the Supplementary Material. These two Theorems indicate that the unbiasedness and the consistency of the MI estimators is conveniently inherited by the corresponding TC estimators. Note that \citet{belghazi2018mutual} have shown that  MINE (in \eqref{eq:mine}) MI estimator is strongly consistent. Consequently, we have: 
n\begin{corollary}\label{thm:TC-mine-consistent}
There exists an score network function family $\{\phi(\cdot, \cdot)\}$ (as described in \eqref{eq:mine}), such that  $\widehat{\TC}_{\text{Line}}[\hat{\MI}_{\text{MINE}}]$ and $\widehat{\TC}_{\text{Tree}}[\hat{\MI}_{\text{MINE}}]$ are strongly consistent.
\end{corollary}
Apart from the MINE MI estimator, we also analyze the asymptotic behaviors of other MI variational estimators (InfoNCE in \eqref{eq:infonce}, NWJ in \eqref{eq:NWJ}, and CLUB in \eqref{eq:club}). We found that both InfoNCE and NWJ are strongly consistent while CLUB does not guarantee the consistency (supportive proofs are provided in Supplementary Material).
Therefore, we summarize the consistency of the corresponding TC estimators in the corollaries below:
\begin{corollary}\label{thm:TC-infonce-and-nwj-consistent}
For any MI estimator $\hat{\MI} \in \{\hat{\MI}_\text{InfoNCE}, \hat{\MI}_\text{NWJ} \}$, there exists a score network function family $\{\phi(\cdot, \cdot)\}$, such that  $\widehat{\TC}_{\text{Line}}[\hat{\MI}]$ and $\widehat{\TC}_{\text{Tree}}[\hat{\MI}]$ are both strongly consistent.
\end{corollary}
Moreover, strong consistency is a sufficient condition for consistency~\citep{degroot2012probability}, hence all of TC-MINE, TC-NWJ and TC-InfoNCE estimators are also consistent statistics.
For unbiasedness, though Theorem~\ref{thm:linear-prop-unbiased} indicates that unbiased MI estimators can lead to unbiased TC estimators, to the best of our knowledge, none of the previous variational MI estimators~\citep{belghazi2018mutual,oord2018representation,poole2019variational,cheng2020club} are unbiased. Therefore, we leave the study of unbiased TC estimators for future work. Besides the above theoretic analysis, we empirically test our TC estimator in Section~\ref{sec:experiments} with the application tasks introduced in Section~\ref{sec:related-work}.

\begin{figure*}[t]
    \centering
    \includegraphics[width=0.9\textwidth]{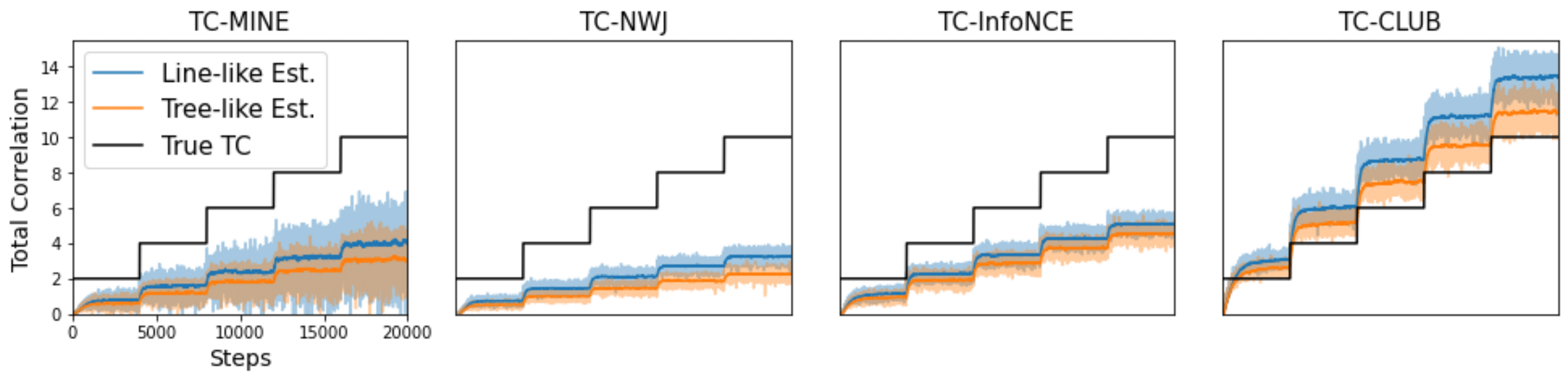}
    \vspace{-2mm}
     \caption{Simulation performance of TC \textit{line-like} and \textit{tree-like}  estimators with different MI estimators.}  \label{fig:TC-est-results}
    \vspace{-2.5mm}
\end{figure*}

\vspace{-2mm}
\section{RELATED WORK}
\label{sec:related-work}
\vspace{-2mm}
\textbf{Disentangled Representation Learning }
Disentangled representation learning (DRL) seeks to map each  data instance into independent latent subspaces, while different subspaces meaningfully  represent different attributes of the instance~\citep{locatello2019challenging}. 
Recently, DRL methods have attracted considerable interest on various learning tasks such as  domain adaptation~\citep{gholami2020unsupervised}, conditional generation~\citep{burgess2018understanding}, and few-shot learning~\citep{yuan2020improving}.
DRL methods are mainly recognized  into two categories as  unsupervised and supervised DRL. Prior unsupervised DRL works \citep{burgess2018understanding,kim2018disentangling} utilize different regularizers to make each dimension of latent space to be as independent as possible, which 
has been challenged by \citet{locatello2019challenging} in that each embedding dimension may not be related to a meaningful data attribute. Alternatively, supervised DRL methods~\citep{hjelm2018learning,kim2018disentangling,cheng2020improving,yuan2020improving} add different supervision terms on different embedding components, which effectively learn meaningful embedding while enabling disentanglement. 
Both supervised and unsupervised DRL methods require correlation reduction technique to prevent the embedding information from leaking into other embedding components. 
%
%
To reduce embedding correlation, \citet{hjelm2018learning,kim2018disentangling} use adversarial training methods, while \citet{chen2018isolating,cheng2020improving, yuan2020improving} minimize  statistical dependency (\textit{i.e.}, MI and TC), between different embedding components.

\textbf{Contrastive Representation Learning }
Contrastive representation learning is a fundamental training methodology which maximizes the difference of positive and negative data pairs to obtain informative representations. In contrastive learning, a pairwise distance/similarity score function is always set to measure data pairs. Then, the learning objective is to maximize the margin between scores of positive and negative data pairs. Prior contrastive learning has achieved satisfying performance in numerous tasks, such as metric learning~\citep{weinberger2006distance,davis2007information}, word embedding learning~\citep{mikolov2013efficient}, and graph embedding learning~\citep{tang2015line,grover2016node2vec}. 
Recently, contrastive learning has been recognized as a powerful tool in unsupervised or semi-supervised learning scenarios~\citep{he2020momentum,chen2020simple,gao2019auto}, which significantly narrows the gap of performance of supervised and unsupervised learning methods. Among these unsupervised methods, \citet{gao2021simcse} proposed a contrastive text representation learning framework (SimCSE). For each sentence, SimCSE use the dropout mechanism to generate sentence augmentation pairs, then maximize the mutual information within the augmented embedding pairs. The empirical results demonstrate contrastive learning obtains high-quality embeddings even with little supervision~\citep{gao2021simcse}.


\vspace{-2mm}
\section{EXPERIMENTS}
\label{sec:experiments}
\vspace{-2mm}
In this section, we empirically evaluate the effectiveness of our TCeld on three tasks: TC estimation, minimization, and maximization. For TC estimation, we generate synthetic data from correlated multi-variate Gaussian distributions, then compare the predictions from our TC estimator with the ground-truth TC values. For TC minimization, we conduct a multi-component disentangled representation learning experiment on Colored-MNIST~\citep{esser2020disentangling} dataset, to minimize the total correlation among the digit, style, and color embeddings of digit images. To test the TC maximization ability, we apply our TC estimator into a contrastive text learning framework to maximize the TC value among different sentence augmentations.  Since our proposed TC estimators can be flexibly induced by different MI estimators, for convenience, we refer the TC estimator as TC-(MI estimator name), or TC-(Line/Tree)-(MI estimator name) if the decomposition strategy is specified. For example, TC-Line-MINE denotes the TC estimator by line-like decomposition with the MINE MI estimator.

\vspace{-2mm}
\subsection{TC Estimation on Simulation Data}
\label{sec:exp-tc-estimation}
\vspace{-2mm}
We first test the estimation quality of our TC estimators under simulation scenarios.  We selected four MI estimators, MINE~\citep{belghazi2018mutual}, NWJ~\citep{nguyen2010estimating}, InfoNCE~\citep{oord2018representation}, and CLUB~\citep{cheng2020club}, to induce our TC estimators.  Then we test TCeld with both tree-like and line-like strategies. The detailed description and implementation of the four MI estimators are shown in the Supplementary Material, where the results of non-variational methods like KDE, \textit{k}-NN and their variants are also reported.

To evaluate TCeld's estimation performance with different ground-truth values, we sample simulated data from a four-dimensional Gaussian distributions $(\vx_1, \vx_2, \vx_3, \vx_4) \sim\calN(\bm{0}, \bm{\Sigma})$, where $\bm{\Sigma}$ is a covariance matrix with all diagonal elements equal to $1$, which means the variance of each $\vx_i$ is normalized to $1$. With this Gaussian assumption, the true TC value can be calculated in a closed-form as $\TC(\vx_1,\vx_2, \vx_3, \vx_4) = -\frac{1}{2} \log \text{Det}(\bm{\Sigma})$, where $\text{Det}(\bm{\Sigma})$ is the determinant of $\bm \Sigma$. Therefore, we can adjust the correlation coefficients (non-diagonal elements) in $\bm \Sigma$ to set the ground-truth TC values in the set $\{2.0,4.0,6.0,8.0,10.0\}$ (described in details in the Supplementary Material). The sample dimension is set to 20. The dimension of hidden states for variational estimators is 15. For each fixed TC value, we sample data 4000 times, with  batch size 64 and learning rate $0.001$ to train the estimators.


In Figures~\ref{fig:TC-est-results} we report the performance of TCeld with different MI bounds at each training step. In each figure, the true TC value is shown as a step function drawn as a black line. The line-like and tree-like estimation values are presented for different steps as shaded blue and orange curves respectively. The dark blue and orange curves illustrate the local averages of the estimated values, with a bandwidth equal to 200. With both  tree-like and line-like decomposition, the estimation values for TC-MINE, TC-NWJ and TC-InfoNCE stay below the truth TC step functions, supporting the claims in Theorem~\ref{thm:TC-MI-upper-lower-bounds}. For the only MI upper bound CLUB, the estimated values initially lie beneath the ground-truth TC, but finally converge above the step function. This is because at the beginning of the estimator training, the parameters are not well learned from the synthetic samples, and fail to support a valid MI upper bound. With the training progress going on, the estimator performs better and finally converges to the desired upper-bound estimation.


Furthermore, we provided the bias, variance, and the mean squared error (MSE) of TC estimation values in the 
 Supplementary Material.
The tree-like and line-like strategies have insignificant effects on variance of our TC estimators, where TC-NWJ always keep the lowest variance. However, as for bias and MSE, TC-CLUB works uniformly better than other estimators under line-like step, while TC-InfoNCE outperforms others mostly under tree-like decomposition. 
By further analyzing this phenomenon, we find that  when estimating $\MI(\vv;\vu)$, CLUB requires a variational approximation $q_\theta(\vv|\vu)$. When we use the line-like decomposition strategy, $\vv = \vx_{i+1} $ is always a single variable, and $\vu = \mX_{1:i}$ is the concatenation of $(\vx_1,\dots,\vx_i)$. The $q_\theta(\vv| \vu)$ with a neural network implementation can have better performance with output $\vv$ in a fixed low dimension. However, all the other MI estimators need to learn a score function $\phi(\vv,\vu)$, where the imbalanced inputs $\vv= \vx_{i+1}$ and $\vu= \mX_{1:i}$ can hinder the learning of function $\phi$.
In contrast, the tree-like TC estimators split variables equally into subgroups, which facilitate the learning of $\phi(\vu, \vv)$ with $\vu=\mX_{i:\lfloor {(i+j)}/{2} \rfloor}$ and $\vu=\mX_{\lfloor {(i+j)}/{2} \rfloor+1:j}$ for the lower-bound methods. For CLUB, the tree-like decomposition increases the output dimension of the variational net $q_\theta(\vv|\vu)$ and makes the learning more challenging, explaining TC-CLUB's lower performance than TC-InfoNCE in Supplementary Material.



%
\vspace{-2mm}
\subsection{TC Minimization for Disentangled Representation  Learning}\label{sec:exp-tc-minimization}
\vspace{-2mm}

\begin{figure}[t]
\centering
\includegraphics[width=0.68\columnwidth]{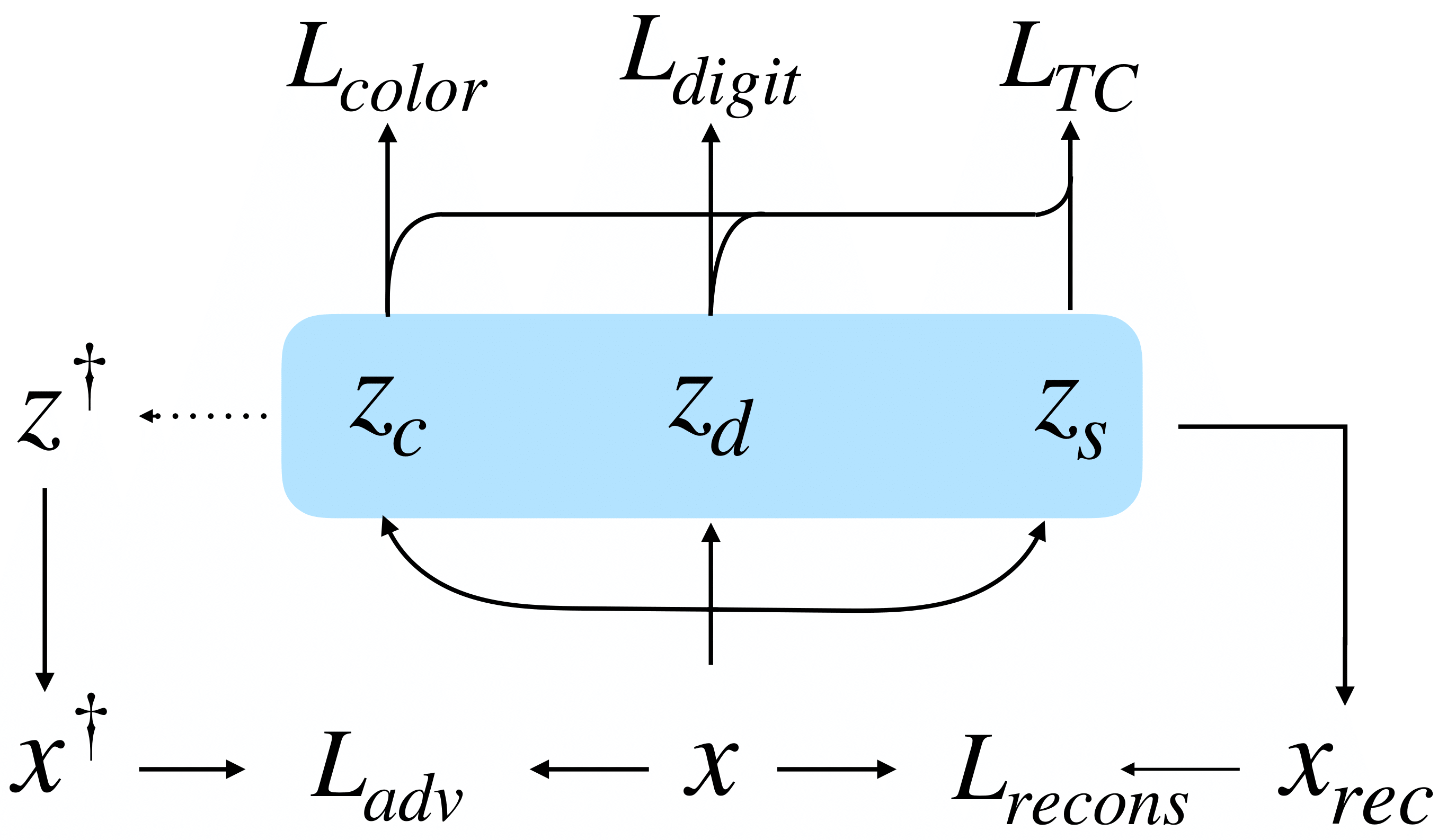}
\vspace{-1mm}
\caption{TC disentangling framework:  $\vx$ is an input image. $\vz_c$, $\vz_d$, $\vz_s$ are the \textit{color}, \textit{digit}, \textit{style} embeddings respectively. $\vx_\text{rec}$ is the reconstruction. $\vx^\dag$ is a generated sample from the shuffled latent space for adversarial training.}
\label{fig:disentangle_structure}
\vspace{-2mm}
\end{figure}
\begin{figure}[t]
\centering
\includegraphics[width=0.48\columnwidth]{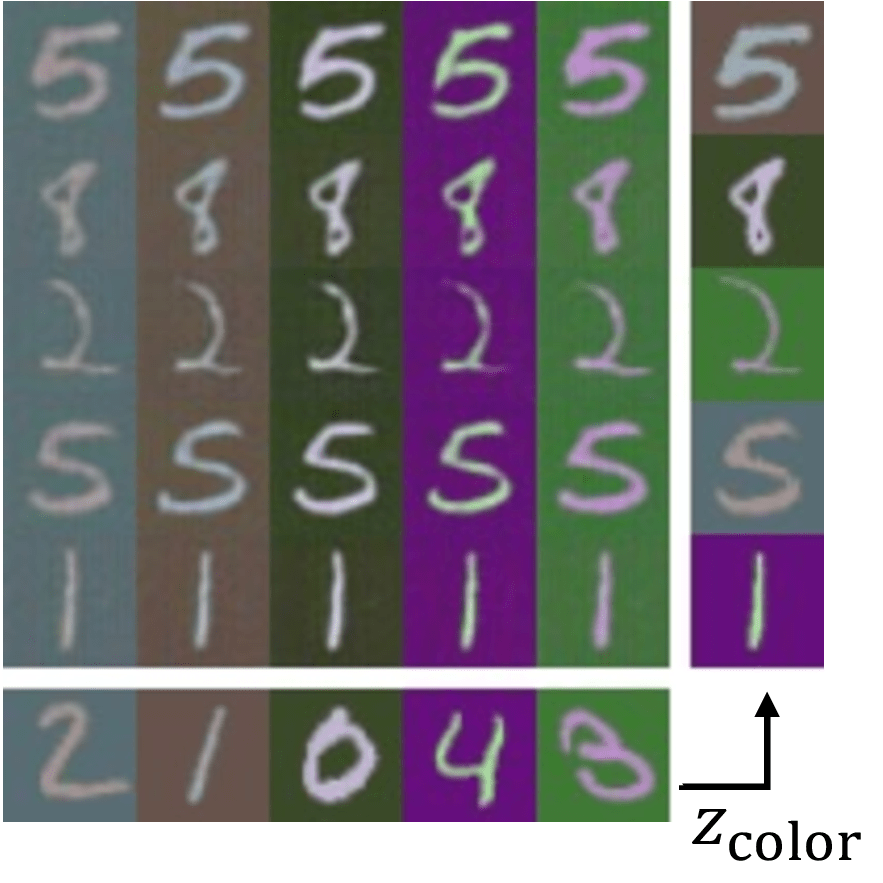}
\hspace{3pt}
\includegraphics[width=0.48\columnwidth]{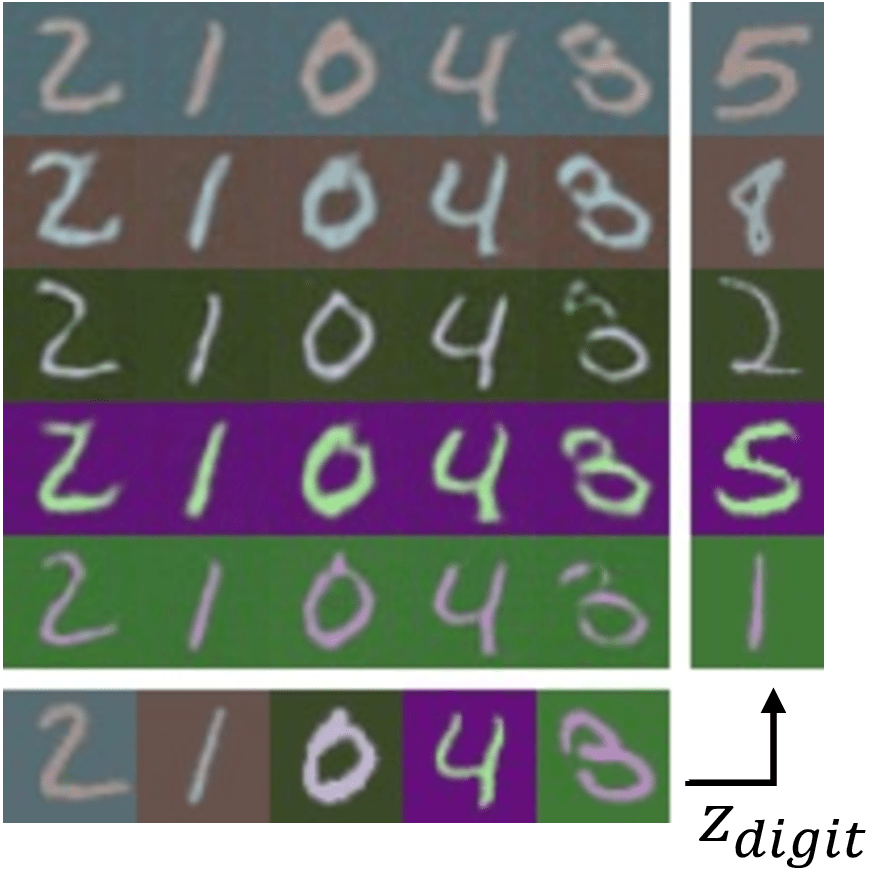}
\vspace{-2mm}
\captionof{figure}{Generated examples with latent embedding swapping: The latent embeddings \(z_{color}\)(left-side), \(z_{digit}\)(right-side) of the bottom row (source) are swapped to the corresponding embeddings of images in the right column (target).
}
\label{fig:disentangle_result}
\vspace{-4mm}
\end{figure}

For the TC minimization task, we conduct an experiment on a ColorMNIST hand-writing dataset~\citep{lecun1998gradient, esser2020disentangling} to learn disentangled digit, color, and style representations of each hand-written number image. Each input data $(\vx^i, \vy_d^i, \vy_c^i)$ contains three components: a image \(\vx^i\), its digit label \(\vy_d^i \in \{0,1,\ldots, 9\}\), and its color label vector \(\vy_c^i = (R^i, G^i, B^i)\), where $R^i, G^i, B^i\in [0,1]$ represents the intensity of colors (\textit{Red}, \textit{Green}, \textit{Blue}).
%

To learn the digit, color, and style representations from input images,  we use the neural encoder \(\calE(\cdot)\) to map each image $\vx_i$ to the corresponding latent representations \(\vz_{d}\), \(\vz_{c}\), \(\vz_{s}\), respectively. The digit representation $\vz^i_d$ is supposed to include sufficient digit information from $\vx^i$, hence we use a digit classifier $\calF_d(\cdot)$ to predict the digit label with loss $\calL_{digit} = \text{Cross-Entropy}(\calF_d(\vz^i_d), \vy_d^i)$. Similarly, we set a color regression function $\calF_c(\cdot)$ on the color embedding $\vz_{c}^i$ to ensure representativeness by minimizing $l$-2 norm $\calL_\text{color} = \Vert \calF_c(\vz_{c}^i) - \vy_{c}^i \Vert^2$. Excluding the digit and color information, the remaining part from the image should be the human hand-writing style information, which is assumed to be independent of the digit and color information. Therefore, we minimize the total correlation $\calL_\text{TC} = \TC(\vz_{d}, \vz_{c}, \vz_{s})$ as a regularizer to make sure different representation components do not include information from the others. Finally, we introduce a decoder $\calD(\cdot)$ to reconstruct the original image $\vx^i$ from $(\vz_d, \vz_c, \vz_s)$ to induce sufficient information into the latent representations with loss $\calL_\text{recons} = \Vert \calD(\vz_d, \vz_c, \vz_s) - \vx \Vert^2$.
To further enhance the generation quality of the decoder $\calD(\cdot)$ , we adapt an adversarial learning regularizer, 
 where  we randomize the combination of the latent representation components in each batch, 
 then treat the corresponding decoder output as artificial (synthesized) data \(\vx^\dag
\). 
With a discriminator \(\dis\), we use adversarial training \citep{goodfellow2014generative} to ensure the decoder to generate high-quality samples with loss \(\calL_\text{adv} = \E_\vx [\log \dis(\vx)] +  \E_{\vx^\dag} [\log (1 - \dis( \vx^\dag))] \). 
\Figref{fig:disentangle_structure} illustrates the whole framework.
More details about the framework are shown in the Supplementary Material.

\textbf{Evaluations }
To evaluate the quality of our disentangled representations, we conduct a controllable generation testing task to check whether the learned embedding components ($\vz_d$, $\vz_c$, or $\vz_s$) can control the corresponding attributes (digit, color, or style) of the generated sample $\calD(\vz_d, \vz_c, \vz_s)$. Hence, we consider three perspectives to evaluate the disentanglement: 
($\bm{i}$) \textit{digit transfer}:
Select another real sample $\vx'$ from testing set and obtain its digit embedding $\vz'_d$. Next, feed the new latent embedding combination $(\vz'_d, \vz_c, \vz_s)$ into the decoder to generate a sample $\vx^\dag = \calD(\vz'_{d}, \vz_{c}, \vz_{s})$, which is supposed to share the same digit information with $\vx'$, so we predict the digit label on $\vx^\dag$ then report the classification accuracy Acc$_d$.
($\bm{ii}$) \textit{digit preservation}: Select another testing sample $\vx''$, and replace $\vz_c$  with $\vz''_c$ to generate  $\vx^\ddag = \calD(\vz_{d}, \vz''_{c}, \vz_{s})$. Then predict the digit label of $\vx^\ddag$ and report the digit classification accuracy Acc$_c$.
($\bm{iii}$) \textit{color transfer}: For selected $\vx''$ in (${ii}$), we have its  color label vector $\vy''_c$. Hence, we can directly synthesize a $\bar{\vx}^\ddag$ by setting $\vy''_c$ on $\vx$.  We report the $l_2$-distance (Res.$l_2$) between generated $\vx^\ddag$ and synthetic $\bar{\vx}^\ddag$ as color transfer quality.

\begin{figure*}
\centering
  \begin{subfigure}{.18\textwidth}
  \centering
\includegraphics[width=.8\textwidth]{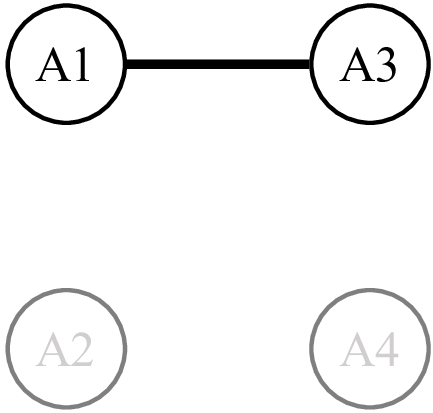}
\caption{MI-fix}
\end{subfigure}
  \begin{subfigure}{.18\textwidth}
   \centering
\includegraphics[width=.8\textwidth]{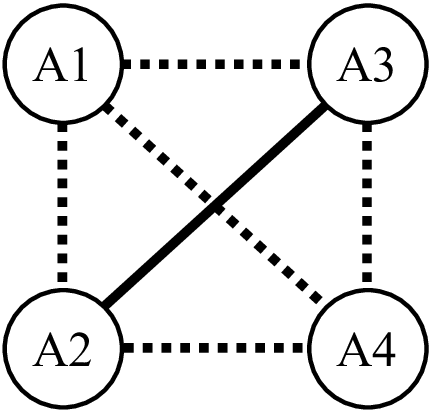}
\caption{MI-sample}
\end{subfigure}
  \begin{subfigure}{.18\textwidth}
   \centering
\includegraphics[width=0.8\textwidth]{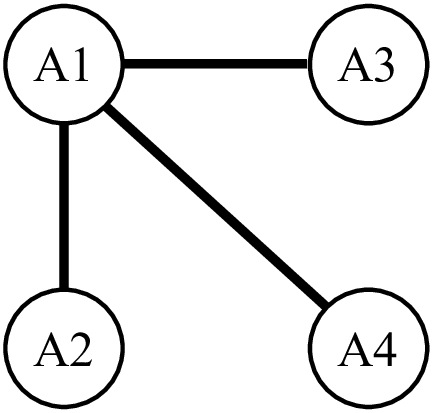}
\caption{Core-View}
\label{fig:core-view}
\end{subfigure}
  \begin{subfigure}{.18\textwidth}
   \centering
\includegraphics[width=0.8\textwidth]{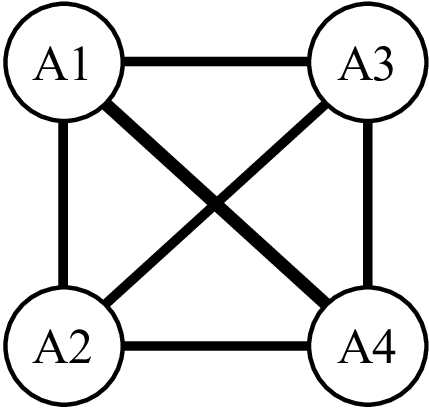}
\caption{Full-Graph}
\end{subfigure}
  \begin{subfigure}{.18\textwidth}
   \centering
\includegraphics[width=0.88\textwidth]{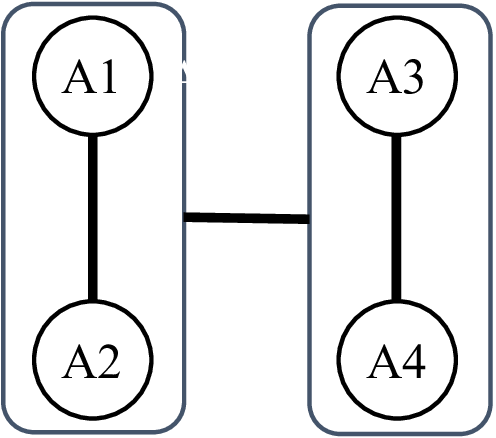}
  \caption{TC-Tree}
\end{subfigure}
\vspace{-2mm}
\caption{Correlation maximization strategy for multiple  augmentations. Nodes A$_1$, A$_2$, A$_3$, A$_4$ mean four input augmentations.
Each solid line indicates an MI maximization between the connected two augmentations.
Each dashed line means the connected  augmentation pair is randomly selected.
}
\label{fig:aug-structure}
\vspace{-2mm}
\end{figure*}

\begin{table}[t]
\caption{ Controllable generation results on ColorMNIST.
``$\checkmark$''  means adversary training is applied. 
Acc$_d$ and  Acc$_c$ are generated digit classification accuracy.  Res.$l_2$ measures the residual $l_2$-distance for color transferred samples.}
\vspace{-1mm}
\centering
\resizebox{0.79\columnwidth}{!}{
\begin{tabular}{@{}clrrr@{}}

\toprule
Adv.                 & Methods    & Acc$_d$ & Acc$_c$ & Res. $l_2$   \\ \midrule
\multirow{2}{*}{} & AE  &  2.70       & 91.65     &   168.95            \\
                    & TC (Ours)  &  92.70  &  97.69    & \textbf{28.66}        \\ \midrule

\checkmark& AE  & 0.14    & \textbf{98.82}   & 389.73        \\
\checkmark                          & 
VAE
& 20.24   & 92.21   & 254.83        \\
\checkmark                          & 
DIIN
&  94.57   & 97.45    & 79.38              \\
\checkmark                          & TC (Ours)  & \textbf{96.38}   & 98.23   & 43.04         \\ \bottomrule 
\end{tabular}
\label{table:disen_control_gen}
}
\vspace{-3mm}
\end{table}
\textbf{Results and Analysis }
We compared our method with vanilla auto-encoder (AE), variational auto-encoder (VAE)~\citep{kingma2013auto}, and DIIN~\citep{esser2020disentangling}, and report the aforementioned evaluation metrics in \Tableref{table:disen_control_gen}. Implementation and setup details are provided in the Supplementary Material.
Vanilla AE fails on Acc\(_d\) because without any disentangled regularizer,  \(\vz_s\) can contain the information revealed from \(\vz_d\) and  \(\vz_c\)
VAE partially solves the embedding entanglement problem with its KL divergence term in the learning objective, which encourages the latent embedding being close to a dimension-wise-independent standard Gaussian.   
DIIN\citep{esser2020disentangling} achievers more significant   improvement, by projecting the latent space of an auto-encoder to a multi-variate Gaussian distribution using normalizing flow \citep{kingma2018glow}. 
Our TC-based method uniformly outperforms the vanilla AE without the adversarial training. Among all method with adversarial training, the vanilla AE reaches the highest color classification accuracy, for which  our TC method is also strongly competitive. Moreover, our TC is in the lead on the other two metrics and leave a significant performance gap to the vanilla AE, which indicates our TC-based method can learn more balanced attribute embeddings in terms of representativeness and disentanglement. In addition, we show the generated image examples of digit transfer (\(i\)) and color transfer (\(iii\)) in Figure~\ref{fig:disentangle_result}, where 
both color and digit information can be successfully preserved in the transferred images.
\subsection{TC Maximization for Contrastive Representation Learning}
\vspace{-2mm}

\begin{table*}[t]
\centering
\caption{ Text representation evaluation on STS tasks (Spearman’s correlation with “all” setting). Methods with `` -TC'' means that total correlation is used as the loss function. 
}
\vspace{-2mm}
\resizebox{\textwidth}{!}{
\begin{tabular}{lcccccccc}
\toprule
Method                                          & STS12          & STS13          & STS14          & STS15          & STS16          & STS-B          & SICK-R         & Avg.                         \\ \midrule
IS-BERT$_{\textrm{base}}$~\citep{zhang2020unsupervised}                  & 56.77          & 69.24          & 61.21          & 75.23          & 70.16          & 69.21          & 64.25          & 66.58                        \\
ConSERT$_{\textrm{base}}$~\citep{yan2021consert}         & 64.64          & 78.49          & 69.07          & 79.72          & 75.95          & 73.97          & 67.31          & 72.74                        \\
SimCSE-BERT$_{\textrm{base}}$~\citep{gao2021simcse}     & 68.40          & 82.41          & 74.38          & 80.91          & 78.56          & 76.85          & 72.23          & 76.25  \\
SimCSE-BERT$_{\textrm{base}}$-TC     & 68.66 & 81.45 & 74.34 & 81.81 & 79.24 &    78.85     &      72.46      & 76.69
\\
PromptBERT$_{\textrm{base}}$ ~\citep{DBLP:journals/corr/abs-2201-04337}       & 71.56          & 84.58          & 76.98          & 84.47          & \textbf{80.60} & 81.60          & 69.87          & 78.54$_{\pm 0.15}$           \\
PromptBERT$_{\textrm{base}}$-TC                                & \textbf{72.05} & \textbf{84.61} & \textbf{77.23} & \textbf{84.73} & 80.34          & \textbf{81.89} & \textbf{70.23} & \textbf{78.72$_{\pm 0.10}$}  \\ \midrule
SimCSE-RoBERTa$_{\textrm{base}}$~\citep{gao2021simcse}    & 70.16          & 81.77          & 73.24          & 81.36          & 80.65          & 80.22          & 68.56          & 76.57                        \\
SimCSE-RoBERTa$_{\textrm{base}}$-TC    & 71.46 & 82.16 & 74.14 & 82.17 & 80.93 &    80.02     &      68.24      & 77.02 \\ 
PromptRoBERTa$_{\textrm{base}}$~\citep{DBLP:journals/corr/abs-2201-04337}    & \textbf{73.94} & 84.74          & 77.28          & 84.99          & 81.74 & 81.88          & 69.50          & 79.15$_{\pm 0.25} $          \\
PromptRoBERTa$_{\textrm{base}}$-TC                                 & 72.58 &	 \textbf{85.06} &	 \textbf{78.24} &	 \textbf{85.82} &	 \textbf{81.95} &	 \textbf{82.94} &	 \textbf{70.47} &	 \textbf{79.58$_{\pm 0.25}$} \\ 
\bottomrule
\end{tabular}
}
\label{table:sts-results}
\end{table*}
To test the performance of TCeld on TC maximization, we conduct a unsupervised text representation learning experiment following the SimCSE~\citep{gao2021simcse} setups. More specifically,  we aim to train a encoder $\calE(\cdot)$ to map each sentence $\vx$ into representative embedding $\vz=\calE(\vx)$. According to SimCSE (introduced in Section~\ref{sec:related-work}), one can learn in a unsupervised way the encoder $\calE(\cdot)$, by first generating several data augmentations $(\tilde{\vx}_1, \tilde{\vx}_2, \dots, \tilde{\vx}_n)$, then maximizing the correlation of corresponding $(\tilde{\vz}_1, \tilde{\vz}_2, \dots, \tilde{\vz}_n)$. Most of the previous contrastive learning methods~\citep{chen2020simple,gao2021simcse,DBLP:journals/corr/abs-2201-04337} focus on two-augmentation cases, where the mutual information between the two augmentations $\MI(\tilde{\vz}_1; \tilde{\vz}_2)$ is maximized for each input $\vx$. Moreover, none of current text contrastive learning method handles multi-augmentation ($n \geq 3$) situations. However, \citet{tian2020contrastive} point out that contrastive learning with  more augmentations can further enhance the latent embedding quality. Therefore, for testing our TC estimators while attempting the \textit{first} multi-augmentation text contrastive learning, we plan to maximize $\TC(\tilde{\vz}_1, \tilde{\vz}_2,\dots, \tilde{\vz}_n)$ to train the text encoder $\calE$.

\textbf{Model Frameworks }
We conduct our multi-augmentation text contrastive leaning by extending prior two-augmentation methods,  SimCSE \citep{gao2021simcse} and PromptBERT \citep{DBLP:journals/corr/abs-2201-04337} with \textit{four} augmentations for each text input. Both SimCSE and PromptBERT maximize the InfoNCE MI estimator between the two augmentation embeddings with a BERT\citep{devlin2018bert}-based pretrained text encoder. Correspondingly, our extended SimCSE-TC and PromptBERT-TC utilize the same encoder structure but maximize TC-InfoNCE of four augmentation embeddings for each sentence. Following SimCSE~\citep{gao2021simcse} and PromptBERT~\citep{ DBLP:journals/corr/abs-2201-04337}, we finetine the text encoder $\calE$ on pretrained  BERT$_{\textrm{base}}$~\citep{devlin2018bert} and RoBERTa$_{\textrm{base}}$~\citep{liu2019roberta}. Based on our observation in Section~\ref{sec:exp-tc-estimation}, tree-like decomposition empirically works better for InfoNCE estimator. Therefore, we select TC-Tree-InfoNCE for this text contrastive learning task.
More setup details can be found in the Supplementary Material.

\textbf{Evaluation}
Following previous work~\citep{gao2021simcse, DBLP:journals/corr/abs-2201-04337}, 
we evaluate models on 7 semantic textual similarity (STS) datasets: STS12~\citep{agirre2012semeval}, STS13~\citep{ agirre2013sem}, STS14~\citep{agirre2014semeval}, STS15~\citep{agirre2015semeval}, STS16~\citep{ agirre2016semeval}, STS Benchmark~\citep{cer2017semeval} and SICK-Relatedness~\citep{marelli2014sick}. 
The task is to predict the similarity (ranging from 0 to 5) between paired sentences with learned text representations.  We report Spearman's correlation~\citep{myers2013research} between model prediction and the ground truth. More details about the model design, hyperparameter settings and evaluation metrics are in the Supplementary Material.

\textbf{Results and Analysis }
We report the mean and standard deviation over $10$ runs with different random seeds in Table~\ref{table:sts-results}. On most of the evaluation datasets, our TC-based methods outperform their corresponding baselines, in which Prompt-BERT/RoBERTa are the  state-of-the-art unsupervised sentence representation learning methods. These results also underline the claim that more augmentations lead to higher representation quality in contrastive learning.  

\begin{table}[t]
\centering
\caption{Ablation study of different correlation maximization strategies. Average Spearman coefficient is reported over the seven STS tasks. }
\vspace{-2mm}
\centering
\resizebox{\columnwidth}{!}{
\begin{tabular}{@{}lll@{}}
\toprule
\diagbox{Method}{Model} & \multicolumn{1}{c}{\begin{tabular}[c]{@{}c@{}}Prompt\\ BERT\end{tabular}}             & \multicolumn{1}{c}{\begin{tabular}[c]{@{}c@{}}Prompt\\ RoBERTa\end{tabular}}          \\ \midrule
MI-fix                                & 78.54$_{\pm 0.15}$   & 79.28$_{\pm 0.23}$ \\
MI-sample                               & 78.41$_{\pm 0.13}$  & 79.27$_{\pm 0.34}$ \\
Core-View~\citep{tian2020contrastive}                                  & 78.64$_{\pm 0.22}$   & 79.41$_{\pm 0.25}$ \\
Full-Graph~\citep{tian2020contrastive}                                & 78.52$_{\pm 0.13}$  &  79.49$_{\pm 0.26}$ \\
TC                 & 78.72$_{\pm 0.10}$  & 79.58$_{\pm 0.25}$ \\ \bottomrule
\end{tabular}
}
 \label{table:sts-ablation-study}
\end{table}

For the ablation study, we fix the number of augmentations to $4$, and test the influence of different embedding correlation maximization strategies. Since there is no prior work on multi-augmentation text contrastive learning, we proposed several substitute strategies by ourselves in Figure~\ref{fig:aug-structure}:
(a) 
\textit{MI-fix}: only select the first two augmentations and omit the others; 
(b) \textit{MI-sample}: randomly select two augmentations and omit the others; 
(c) \textit{Core-View}~\citep{tian2020contrastive}:  calculate the MI between one fixed augmentation and each augmentation in the rest.
(d) \textit{Full-Graph}~\citep{tian2020contrastive}: calculate the MI values between each augmentation pairs. 
%
From the results in Table~\ref{table:sts-ablation-study}, MI-fix and MI-sample which only utilized two augmentations, have lower Spearman score than the other methods using all augmentations' information. Our TC-based correlation maximization strategy slightly outperforms Core-View and Full-Graph. Core-View does not consider the correlation among the other augmentations (A$_2$, A$_3$, A$_4$ in Figure~\ref{fig:core-view}). Full-Graph uses expensive MI estimators, which quadratically increases the computational complexity of augmentation correlation maximization, while being more prone to overfitting than TC with the same data size.






\vspace{-2mm}
\section{CONCUCLUSION}
\vspace{-2mm}
We have derived a Total Correlation Estimation with Linear Decomposition (TCeld), which converts total correlation into summation of mutual information terms along two separation paths (\textit{i.e.}, line-like and tree-like).
By applying variational estimators to each MI term in TCeld, we have obtained TC-Line and TC-Tree estimators. Further, we have analyzed the statistical properties of the proposed TC estimators and claimed their strong consistency when induced by appropriate MI estimators such as MINE, NWJ, and InfoNCE. Moreover, we have empirically demonstrated the effectiveness of the proposed TC estimators on both TC estimation and optimization tasks. The experimental results show that our TC estimators can only provide reliable estimation from samples, but also serve as an effective learning regularizer for model training. More properties of TCeld, such as sample complexity, unbiasedness, and low-variance, remain to be explored both theoretically and empirically.
%
We hope this study can serve to promote TC, as a multi-variate information concept, to apply into cutting-edge deep learning models, such as representation learning, controllable generation,  model distillation and ensemble.

\bibliography{reference}
\bibliographystyle{plainnat}
\clearpage

\appendix
\onecolumn 
\section{PROOFS}
\begin{proof}[Proof of Theorem~\ref{thm:general_connection}]
Note that $\mX_{\calA} := (\vx_{i_1}, \vx_{i_2}, \dots, \vx_{i_m} )$ and $\mX_{\hat{\calA}} = \mX / \mX_{\calA}$. Denote $\mX_{\hat{\calA}}= (\vx_{j_1}, \vx_{j_2}, \dots, \vx_{j_l})$. Then
\begin{align*}
&\mathcal{TC}(\mX) =\bbE_{p(\mX)} \left[\log \frac{p(\vx_1, \vx_2, \dots, \vx_n)}{p(
 \vx_1) p(\vx_2)\dots p(\vx_n) }\right] \\
 = & \bbE_{p(\mX)} \left[\log \left( \frac{p(\mX_\calA)}{p(\vx_{i_1})p(\vx_{i_2})\dots p(\vx_{i_{m}})} \cdot \frac{p(\mX_{\hat{\calA}})}{ p(\vx_{j_1})p(\vx_{j_2})\dots p(\vx_{j_{l}})} \cdot \frac{p(\mX)}{p(\mX_{\calA}) p(\mX_{\hat{\calA}})}
 \right)\right] \\
 = &\TC(\mX_{\calA}) + \TC(\mX_{\hat{\calA}}) + \MI(\mX_{\calA};\mX_{\hat{\calA}}) 
\end{align*}
\end{proof}

\begin{proof}[Proof of Corollary~\ref{thm:line-like}]
We denote $\mX_{i:j} := (\vx_i, \vx_{i+1}, \dots, \vx_{j-1}, \vx_{j} )$. Note that
\begin{align*}
 \mathcal{TC}(\mX_{1:n}) =& \bbE_{p(\vx_1, \vx_2, \dots, \vx_n)} \left[\log \frac{p(\vx_1, \vx_2, \dots, \vx_n)}{p(\vx_1) p(\vx_2)\dots p(\vx_n) }\right] \\
 = & \bbE_{p(\vx_1, \vx_2, \dots, \vx_n)} \left[\log \left( \frac{p(\vx_1, \vx_2, \dots, \vx_{n-1}, \vx_n)}{p(\vx_1,\vx_2,\dots,\vx_{n-1}) p(\vx_{n})} \cdot \frac{p(\vx_1, \vx_2, \dots, \vx_{n-1})}{ p(\vx_1)p(\vx_2)\dots p(\vx_{n-1})} \right)\right] \\
 = &\MI(\vx_1,\vx_2,\dots, \vx_{n-1} ; \vx_n) + TC(\mX_{1:n-1}) \\
 = &\MI(\mX_{1:n-1} ; \vx_n) + \TC(\mX_{1:n-1})
\end{align*}
Similarly, 
\begin{equation}
    \TC(\mX_{1:n}) =\MI(\mX_{1:n-1} ; \vx_n) + \MI(\mX_{1:n-2}; \vx_{n-1}) + \TC(\mX_{1:n-2}) = \sum_{i=1}^{n-1} \MI(\mX_{1:i}; \vx_{i+1})
\end{equation}
\end{proof}

\begin{proof}[Proof of Theorem~\ref{thm:linear-prop-unbiased}]
First consider line-like TC estimator $\widehat{\TC}_\text{Line}[\hat{\MI}] (\mX) = \sum_{i=1}^{n-1} \hat{\MI}(\mX_{1:i}; \vx_{i+1})$.
If the MI estimator $\hat{\MI}$ is unbiased, by definition, $\bbE[\hat{\MI}(\mX_{1:i}; \vx_{i+1})] = \MI(\mX_{1:i}; \vx_{i+1})$. Taking expectation for the TC estimator, 
\begin{equation}
    \bbE [\widehat{\TC}_\text{Line}[\hat{\MI}] (\mX) ]= \sum_{i=1}^{n-1} \bbE[\hat{\MI}(\mX_{1:i}; \vx_{i+1})] = \sum_{i=1}^{n-1} \MI(\mX_{1:i}; \vx_{i+1}) = \TC(\mX), 
\end{equation}
which means that $\widehat{\TC}_\text{Line}[\hat{\MI}] (\mX)$ is unbiased. Similarly, we can show that $\widehat{\TC}_\text{Tree}[\hat{\MI}] (\mX)$ is unbiased. 
\end{proof}

\begin{proof}[Proof of Theorem~\ref{thm:linear-prop-consistent}]
For the convenience of notation, we show the proof with our $\widehat{\TC}_\text{Line}[\hat{\MI}]$ estimator. The proof can be easily applied to $\hat{\TC}_\text{Tree}[\hat{\MI}]$, since both $\hat{\TC}_\text{Line}[\hat{\MI}]$ and $\hat{\TC}_\text{Tree}[\hat{\MI}]$ are linear combination of MI terms. We denote $\hat{\MI}_{m} (\vx ;\vy)  = \hat{\MI} (\{(\vx^k, \vy^k)\}_{k=1}^m)$ and $\TCeld[\hat{\MI}]_m(\mX) = \TCeld[\hat{\MI}](\{\mX^k\}_{k=1}^m)$ as the estimators with $m$ samples $\{(\vx^k, \vy^k)\}_{k=1}^m  \sim p(\vx,\vy)$, and $\{\mX^k\}_{k=1}^m \sim p(\mX)$ respectively.

\textbf{Strong Consistency: } If $\hat{\MI}$ is a strong consistent estimator, by the Definition~\ref{def:strong-consistency}, $\forall \varepsilon > 0$, with a fix variable dimension $n \in \bbN_{+}$, for each variable pair $(\mX_{1:i}, \vx_{i+1})$ ($i=1,2,\dots,n-1)$,  $\exists M_i>0$, such that $\forall m > M_i$,
\begin{equation}\label{eq:appendix-mi-consistent}
\bbP\Big\{\left|\hat{\MI}_{m}(\mX_{1:i}; \vx_{i+1}) - \MI(\mX_{1:i} ; \vx_{i+1})\right| \leq \frac{\varepsilon}{n}\Big\} = 1.    
\end{equation}
Let $\bar{M} = \max \{M_1, M_2, \dots, M_{n-1} \}$, then $\forall m > \bar{M}$, 
\begin{align}
&\bbP\Big\{\left|\TCeld_\text{Line}[\hat{\MI}]_m (\mX) - \TC(\mX)\right| \leq \varepsilon \Big\} \label{eq:prob-TC-consistent-form} \\
= & \bbP \Big\{ \left| \sum_{i=1}^{n-1} \hat{\MI}_m(\mX_{1:i}; \vx_{i+1})- \sum_{i=1}^{n-1}  \MI(\mX_{1:i}; \vx_{i+1})\right| \leq \varepsilon \Big\} \label{eq:prob-TC-leq-eps} \\  
\geq& \bbP\left[ \bigcap_{i=1}^{n-1} \Big\{\left|\hat{\MI}_{m}(\mX_{1:i}; \vx_{i+1}) - \MI(\mX_{1:i} ; \vx_{i+1})\right| \leq \frac{\varepsilon}{n}\Big\}\right]. \label{eq:prob-each-mi-leq-eps}
\end{align}
The inequality between \eqref{eq:prob-TC-leq-eps} and \eqref{eq:prob-each-mi-leq-eps} is because the condition in \eqref{eq:prob-TC-leq-eps} is sufficient to deduce the condition in \eqref{eq:prob-TC-leq-eps}:
\begin{align*}
    &\left| \sum_{i=1}^{n-1} \hat{\MI}_m(\mX_{1:i}; \vx_{i+1})- \sum_{i=1}^{n-1}  \MI(\mX_{1:i}; \vx_{i+1})\right|  \\ \leq & \sum_{i=1}^{n-1}\left|\hat{\MI}_{m}(\mX_{1:i}; \vx_{i+1}) - \MI(\mX_{1:i} ; \vx_{i+1})\right| \leq \frac{n-1}{n} \varepsilon < \varepsilon.
\end{align*}
Denote event $\calB_i = \Big\{\left|\hat{\MI}_{m}(\mX_{1:i}; \vx_{i+1}) - \MI(\mX_{1:i} ; \vx_{i+1})\right| \leq \frac{\varepsilon}{n}\Big\}$, by \eqref{eq:appendix-mi-consistent}, $\bbP[\calB_i] = 1$.
Consider the union $\calB_i \cup \calB_j$, we have:
\begin{equation}
    1 = \bbP[\calB_i] \leq \bbP[\calB_i \cup \calB_j] = \bbP[\calB_i] + \bbP[\calB_j] - \bbP[\calB_i \cap \calB_j] = 2 - \bbP[\calB_i \cap \calB_j] \leq 1,
\end{equation}
which means $\bbP[\calB_i \cap \calB_j] = 1$. Iteratively applying this conclusion, we know $\bbP[\cap_{i=1}^{n-1} \calB_i] = 1$. Therefore, $\bbP\left[ \bigcap_{i=1}^{n-1} \Big\{\left|\hat{\MI}_{m}(\mX_{1:i}; \vx_{i+1}) - \MI(\mX_{1:i} ; \vx_{i+1})\right| \leq \frac{\varepsilon}{n}\Big\}\right] = \bbP[\cap_{i=1}^{n-1} \calB_i] = 1$. Combining with \eqref{eq:prob-TC-consistent-form} and \eqref{eq:prob-each-mi-leq-eps}, we conclude that $\forall \varepsilon$, $\exists \hat{M}$, such that $\forall m > \hat{M}$, $\left|\TCeld_\text{Line}[\hat{\MI}]_m (\mX) - \TC(\mX)\right| \leq \varepsilon$ almost surely, which supports $\TCeld_\text{Line}[\hat{\MI}]$ is strongly consistent.

\textbf{Consistency: } If $\hat{\MI}$ is a consistent estimator, by the Definition~\ref{def:consistency}, $\forall \varepsilon>0$ and $\sigma >0$, with a fixed variable dimension $n\in \bbN_{+}$, for each variable pair $(\mX_{1:i}, \vx_{i+1})$ ($i=1,2,\dots,n-1$), $\exists M_i $, such that $\forall m > M_i$, 
\begin{equation}
\bbP \Big\{\left|\hat{\MI}_m(\mX_{1:i}; \vx_{i+1}) - \MI(\mX_{1:i}; \vx_{i+1}) \right| < \frac{\varepsilon}{n} \Big\} > 1- \frac{\sigma}{n}.
\end{equation}

Let $\hat{M} = \max\{M_1, M_2, \dots, M_{n-1}\}$, then $\forall m > \hat{M}$, similar to \eqref{eq:prob-TC-consistent-form}, \eqref{eq:prob-TC-leq-eps}, and \eqref{eq:prob-each-mi-leq-eps},
\begin{align}
&\bbP\Big\{\left|\TCeld_\text{Line}[\hat{\MI}]_m (\mX) - \TC(\mX)\right| < \varepsilon \Big\} \label{eq:prob-TC-w-consistent-form} \\
= & \bbP \Big\{ \left| \sum_{i=1}^{n-1} \hat{\MI}_m(\mX_{1:i}; \vx_{i+1})- \sum_{i=1}^{n-1}  \MI(\mX_{1:i}; \vx_{i+1})\right| <\varepsilon \Big\} \label{eq:prob-TC-w-leq-eps} \\  
\geq& \bbP\left[ \bigcap_{i=1}^{n-1} \Big\{\left|\hat{\MI}_{m}(\mX_{1:i}; \vx_{i+1}) - \MI(\mX_{1:i} ; \vx_{i+1})\right| < \frac{\varepsilon}{n}\Big\}\right]. \label{eq:prob-w-each-mi-leq-eps}
\end{align}
Denote $\calB_i = \Big\{\left|\hat{\MI}_{m}(\mX_{1:i}; \vx_{i+1}) - \MI(\mX_{1:i} ; \vx_{i+1})\right| < \frac{\varepsilon}{n}\Big\}$. Since $1 \geq \bbP[\calB_i \cup \calB_j] = \bbP[\calB_i] + \bbP[\calB_j] - \bbP[\calB_i \cap \calB_j]$, we have $\bbP[\calB_i \cap \calB_j] \geq \bbP[\calB_i] + \bbP[\calB_j] - 1  > (1 -\frac{\sigma}{n}) + (1- \frac{\sigma}{n}) - 1 = 1 - \frac{2}{n}\sigma$. Similarly, we can obtain $\bbP[\calB_i \cap \calB_j \cap \calB_k] \geq \bbP[\calB_i \cap \calB_j] + \bbP[\calB_k] -1 \geq (1-\frac{2}{n}\sigma ) +(1-\frac{\sigma}{n}) = 1 - \frac{3}{n}\sigma$ and $\bbP\{\left|\TCeld_\text{Line}[\hat{\MI}]_m (\mX) - \TC(\mX)\right| < \varepsilon \} \geq \bbP\left[\cap_{i=1}^{n-1}\calB_i\right] \geq 1 - \sigma$.
Therefore, $\forall \varepsilon > 0$, $\lim_{m\rightarrow \infty} \bbP\{|\TCeld_\text{Line}[\hat{\MI}]_m(\mX) - \TC(\mX)| \geq \varepsilon\} = 0$.  $\TCeld_\text{Line}[\hat{\MI}]$ is consistent.
\end{proof}

\begin{proof}[Proof of Corollary~\ref{thm:TC-infonce-and-nwj-consistent}]
By Theorem~\ref{thm:linear-prop-consistent}, only need to show both $\hat{\MI}_\text{InfoNCE}$ and $\hat{\MI}_\text{NWJ}$ are strongly consistent. Inspired by the proof of Theorem~2 in \citet{belghazi2018mutual}, we only need to proof the following two lemmas:

\begin{lemma}\label{thm:lemma1}
For any $\eta > 0$, there exists a feedforward score network function $\hat{\phi}: \Omega \rightarrow \bbR$ such that $|\MI(\vx,\vy)- \hat{\MI}[\hat{\phi}] | \leq \eta$, where $\hat{\MI} \in \{ \hat{\MI}_\text{InfoNCE}, \hat{\MI}_\text{NWJ}\}$, $\hat{\MI}_\text{InfoNCE}[\hat{\phi}] =  \bbE_{p(\vx,\vy)}[\hat{\phi}(\vx,\vy)] - \bbE_{p(\vx)} [\log \bbE_{p(\vy)} [\exp\hat{\phi}(\vx, \vy)] ]$ and $\hat{\MI}_\text{NWJ}[\hat{\phi}] = \bbE_{p(\vx,\vy)} [\hat{\phi} (\vx,\vy) ] - \bbE_{p(\vx)p(\vy)} [\exp(\hat{\phi}(\vx,\vy) - 1 )]$
\end{lemma}

\begin{lemma}\label{thm:lemma2}
For any $\eta > 0$, let $\mathcal{H}$ be the family of functions $\phi: \Omega \rightarrow \bbR$ defined by a give network architecture. Assume the parameter $\theta$ of network $\phi$ are restricted to some compact domain $\Theta \subset \bbR^k$. Then there exists $N \in \bbN_{+}$, such that, $\forall m \geq N$, $|\hat{\MI}_m(\vx,\vy) - \sup_{\phi \in  \calH} \hat{\MI}[\phi]| \leq \eta$ with probability one. Here $\hat{\MI} \in \{ \hat{\MI}_\text{InfoNCE}, \hat{\MI}_\text{NWJ}\}$.
\end{lemma}

To proof Lemma~\ref{thm:lemma1}, for NWJ, we select function $\phi^*(\vx, \vy) = 1 + \log \frac{p(\vx,\vy)}{p(\vx)p(\vy)}$. Then $\bbE_{p(\vx,\vy)}[\hat{\MI}_\text{InfoNCE}[\phi^*] ] = \MI(\vx,\vy) $. The difference can be calculated as 
\begin{equation} \label{eq:appendix-decompose-mi-NWJ}
\MI(\vx, \vy) - \hat{\MI}_\text{NWJ}[\phi] = \bbE_{p(\vx, \vy)} [\phi^* - \phi] - \exp(-1) \bbE_{p(\vx)p(\vy)} [\exp(\phi^*) - \exp(\phi)].
\end{equation}
The right-hand side of \eqref{eq:appendix-decompose-mi-NWJ} has the same form as equation~(25) in \citet{belghazi2018mutual}, except a coefficient $\exp(-1)$ for the second term. Therefore, we can exactly follow the proof of Section~6.2.1 of \citet{belghazi2018mutual} to prove our Lemma~\ref{thm:lemma1} for the NWJ estimator, with only adjusting the coefficient weight of term $|\phi^* -\phi|$ and term $|\exp(\phi^*) - \exp(\phi)|$.

For InfoNCE, we select $\phi^*(\vx, \vy) =\log p(\vx|\vy) $, so that $\bbE_{p(\vx,\vy)}[\hat{\MI}_\text{InfoNCE}[\phi^*]] = \MI(\vx; \vy)$.
The difference can be written as
\begin{align}
    \MI(\vx; \vy) - \hat{\MI}_\text{InfoNCE}[\phi] =& \bbE_{p(\vx,\vy)} [\phi^* - \phi]  - \bbE_{p(\vx)} [\log \bbE_{p(\vy)} [\exp(\phi^*)] -\log \bbE_{p(\vy)} [\exp(\phi)]].
\end{align}
Similarly to Section~6.2.1 in \citep{belghazi2018mutual}, we can consider the cases whether $\phi$ is bounded, then apply the universal approximation theorem to show Lemma~\ref{thm:lemma1} for InfoNCE.

To proof Lemma~\ref{thm:lemma2}, we denote $\bbP = p(\vx,\vy) , \bbQ = p(\vx)p(\vy)$, and $\bbP_m, \bbQ_m$ for emprical distribution with $m$ samples. For NWJ, we calculate the difference
\begin{equation}
    |\hat{\MI}_m(\vx,\vy) - \sup_{\phi \in  \calH} \hat{\MI}[\phi]| \leq \sup_{\phi \in \calH} |\bbE_{\bbP} [\phi] - \bbE_{\bbP_m} [\phi] | + \exp(-1)  \sup_{\phi \in \calH} |\bbE_{\bbQ} [\exp(\phi)] - \bbE_{\bbQ_m}[\exp(\phi)]|,
\end{equation} where the second term of right-hand side has the same form as equation~(32) in Secion~6.2.2 in \citet{belghazi2018mutual}. Therefore, we can follow the proof in Secion~6.2.2 of \citet{belghazi2018mutual} to prove Lemma~2 for NWJ. Similarly, the similar proving process can be applied to InfoNCE.
\end{proof}

\section{EXPERIMENTAL DETAILS}
All experiments are executed on a single NVIDIA Titan Xp GPU with 12,196M memory.


\subsection{TC estimation}
\paragraph{Experiment Design}
Mutual information between two multivariate Guassian distributions $X_1 \sim \mathcal{N}(0,\Sigma_1)$, $X_2 \sim \mathcal{N}(0,\Sigma_2)$ is 
$\frac{1}{2}\log \frac{\det(\Sigma_1)\det(\Sigma_2)}{\det(\Sigma)}$,
where $\Sigma$ is the covariance matrix of the joint distribution \([X_1,X_2]\). 

In our setting, our training samples are sampled from a joint distribution with \(n\) variables \([\vX_{i}, i\in \{0,1,2,3,n-1\}]\), the marginal distribution of each variable have zero mean and identity covariance matrix with dimension \(d\). Therefore, the determinant of covariance matrix of single variable \(\Sigma_i\) is 1 and the mutual information between two variables \(i,j\) is \(-\frac{1}{2}\log \det(\Sigma_{ij})\). The total correlation among variables are \(-\frac{1}{2}\log \det(\Sigma)\). To prove this, we use the idea of line-like structure. Assume that the total correlation of first \(k\) variables are  \(-\frac{1}{2}\log \det(\Sigma_{[:k]})\), the total correlation of the first \(k+1\) variables are 
\begin{align}
    -\frac{1}{2}\log \det(\Sigma_{[:{k+1}]}) = -\frac{1}{2}\log \det(\Sigma_{[:k]}) + \frac{1}{2}\log \frac{\det(\Sigma_{[:k]})}{\det(\Sigma_{[:{k+1}]})},
\end{align}
where \(\Sigma_{:k}\) is the covariance matrix of the first \(k\) variables. 


In our proof-of-concept experiments, we set \(n = 4, d = 10\).
The covariance matrix is
\[
\begin{bmatrix}
\sI_{d} & \sigma \sI_{d} & 0 & 0 \\
\sigma \sI_{d} & \sI_{d} & 0 & 0 \\
0 & 0 & \sI_{d} & \sigma \sI_{d} \\
0 & 0 &  \sigma \sI_{d} & \sI_{d}  
\end{bmatrix}
\]
The total correlation under such a design is \(-d\log(1-\sigma^2)\).

As we mentioned in the paper,  we can adjust the correlation coefficients (non-diagonal elements) \(\sigma\) to set the ground truth TC values in the set \(\{2.0, 4.0, 6.0, 8.0, 10.0\}\). 


\paragraph{Hyper-parameters}
All MI lower bounds require the learning of a value function $f(\vx,\vy)$; the CLUB upper bound requires the learning of a network approximation $q_\theta(\vy| \vx)$. To make a fair comparison, we set the value function and the neural approximation with one hidden layer and the same hidden units. For the multivariate Gaussian setup, the number of hidden units is $20$.
On the top of the hidden layer output, we add the ReLU activation function.
The learning rate for all estimators is set to $1\times 10^{-4}$. 
\paragraph{Bias, Variance, Mean Squared Error}
Figure~\ref{fig:line-like-mse}~\ref{fig:tree-like-mse} show the bias, variance and mean squared error using different mutual information estimators. The explanation to this figure is shown in the main paper.
\begin{figure}[t]
\begin{minipage}{\textwidth}
    \includegraphics[width=0.9\textwidth]{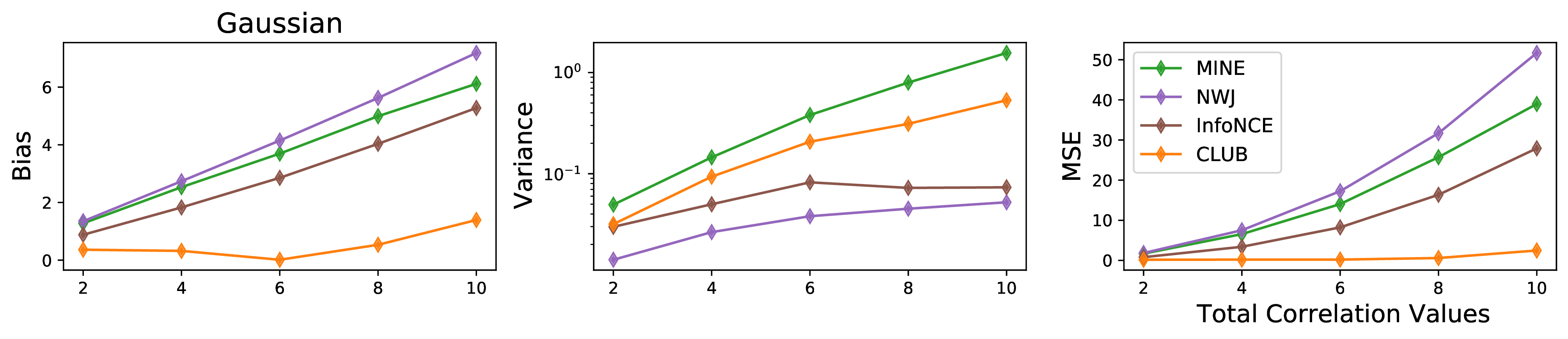}
    \vspace{-2mm}
    \caption{Bias, variance and MSE of \textbf{line-like} TC estimators.}
    \label{fig:line-like-mse}
\end{minipage}
\begin{minipage}{\textwidth}
    \includegraphics[width=0.9\textwidth]{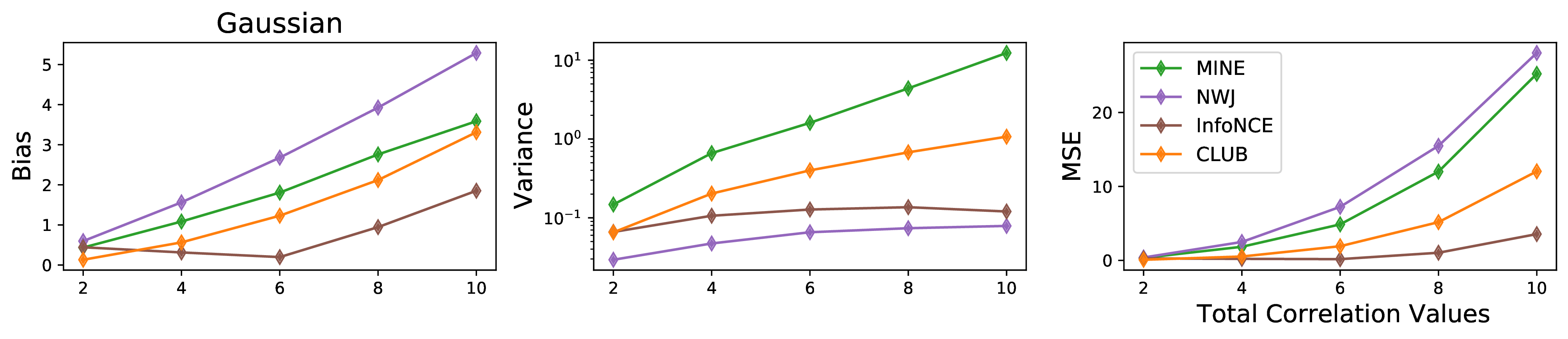}
    \vspace{-4mm}
    \caption{Bias, variance and MSE of \textbf{tree-like} TC estimators.}
    \label{fig:tree-like-mse}
\end{minipage}
\end{figure}

\paragraph{Non-Parametric Methods}

Since the probability can be directly estimated using non-parametric methods, we can also use non-parametric methods to estimate the total correlation directly. Here we compare with KDE, \textit{k}-NN based methods \citep{kraskov2004estimating, pal2010estimation} and its variant~\citep{gao2018demystifying}. As shown table~\Tableref{table:app:non}, the non-parametric methods can only estimate the total correlation decently when the dimension of input $n$ is low. Note that the experiment setting is exactly the same as the main paper except for dimension. 

\begin{table}[t]
\caption{Comparison between ours(Tree-based CLUB) with other non-parametric methods under two circumstance with different data dimensionality ($n = 2$ and $n = 10$. The table shows the absolute error between predicted and ground truth. \textbf{Bold} means the minimal error and the best predication.}
\begin{tabular}{@{}lllllllllll@{}}
\toprule
                                                  & \multicolumn{5}{c}{$n = 2$}                                                                                                        & \multicolumn{5}{c}{$n = 10$}                                                                                           \\ \midrule
\multicolumn{1}{l|}{Total Correlation}            & \multicolumn{1}{c}{2} & \multicolumn{1}{c}{4} & \multicolumn{1}{c}{6} & \multicolumn{1}{c}{8} & \multicolumn{1}{c|}{10}            & \multicolumn{1}{c}{2} & \multicolumn{1}{c}{4} & \multicolumn{1}{c}{6} & \multicolumn{1}{c}{8} & \multicolumn{1}{c}{10} \\ \midrule
\multicolumn{1}{l|}{KNN (\citep{kraskov2004estimating, pal2010estimation}} & 0.08                  & 0.29                  & 0.66                  & 1.29                  & \multicolumn{1}{l|}{2.27}          & 1.34                  & 2.66                  & 4.05                  & 5.47                  & 6.98                   \\
\multicolumn{1}{l|}{Bias-improved KNN (\citep{gao2018demystifying})} & 0.2                   & 0.59                  & 1.24                  & 2.21                  & \multicolumn{1}{l|}{3.54}          & 1.48                  & 2.97                  & 4.51                  & 6.07                  & 7.67                   \\
\multicolumn{1}{l|}{Kernel Density Estimation}    & 1.44                  & 1.40                  & 1.40                  & 1.41                  & \multicolumn{1}{l|}{\textbf{1.37}} & 9.02                  & 8.89                  & 8.98                  & 9.00                  & 8.97                   \\
\multicolumn{1}{l|}{Ours}                         & \textbf{0.03}         & \textbf{0.04}         & \textbf{0.55}         & \textbf{0.88}         & \multicolumn{1}{l|}{1.63}          & \textbf{0.23}         & \textbf{0.37}         & \textbf{0.22}         & \textbf{1.03}         & \textbf{1.96}          \\ \bottomrule
\end{tabular}
\label{table:app:non}
\end{table}

Considering MINE~\citep{belghazi2018mutual} has already performs better than \textit{k}-NN based methods~\citep{kraskov2004estimating, pal2010estimation} and MINE is one of our baselines, our performance should perform better. This is also reflected in the table. Meanwhile, \textit{k}-Nearest Neighbor estimation method focuses on total correlation estimation only with non-differentiable operation. The estimated value is devoted to variable independence and correlation analysis. While in our neural network-based methods, we calculate the total correlation in a derivative way and get meaningful gradient information for efficient back-propagation. The results are shown in our disentangle and representation learning experiments.

\subsection{TC Maximization: Disentanglement}
\label{sup:sec:tcmax}
\paragraph{Model Design}


\begin{table}[t]
\caption{Structure of the encoder \(\calE(\cdot)\), decoder \(\calD(\cdot)\) described in \Secref{sup:sec:tcmax}. The terms in brackets of Conv2d and DeConv2d are (input channel, output channel, filter size, stride size, zero padding size, bias included). IN means instance normalization. DeConv2d represents the deconvolution operator. }
\centering
\begin{tabular}{@{}lll@{}}
\toprule
  & Encoder                                         & Decoder                                           \\ \midrule
0 & Conv2d(3, 64, 4, 2, 1, False), IN, LeakyReLU    & DeConv2d(128, 512, 2, 1, 0, True)                 \\
1 & Conv2d(64, 128, 4, 2, 1, False), IN, LeakyReLU  & DeConv2d(512, 256, 4, 2, 1, False), IN, LeakyReLU \\
2 & Conv2d(128, 256, 4, 2, 1, False), IN, LeakyReLU & DeConv2d(256, 128, 4, 2, 1, False), IN, LeakyReLU \\
3 & Conv2d(256, 512, 4, 2, 1, False), IN, LeakyReLU & DeConv2d(128, 64, 4, 2, 1, False), IN, LeakyReLU  \\
4 & Conv2d(512, 128, 2, 1, 0, True), ,              & DeConv2d(256, 128, 4, 2, 1, False), ,Tanh         \\ \bottomrule
\end{tabular}
\label{sup:table:encoder-decoder}
\end{table}
\begin{table}[t]
\caption{Structure of classifier \(\calF_d(\cdot)\) and regressor \(\calF_c(\cdot)\) described in \Secref{sup:sec:tcmax}. The terms in the bracket of Linear are (input dimension, output dimension, bias included). }
\centering
\begin{tabular}{@{}lll@{}}
\toprule
  & Color Regressor              & Digit Classifier           \\ \midrule
0 & Linear(32, 16, True), ReLU   & Linear(32, 16, True), ReLU \\
1 & Linear(16, 3, True), Sigmoid & Linear(16, 10, True),    
\\ \bottomrule
\end{tabular}
\label{sup:table:digit-color}
\end{table}

The dimension of latent space \(\vz , \vz_{color}, \vz_{digit}\) is \(128, 32, 32 \). The structures of each model are shown in ~\Tableref{sup:table:encoder-decoder}~\ref{sup:table:digit-color}~\ref{sup:table:adv-tc}. We use TC-InfoNCE as our total correlation estimator. Note that when the number of variables is three. There is no difference between tree-based and line-based methods. Both of which require two mutual information estimators, one for mutual information estimation of any pair, the other is to capture the mutual information between the left one and the preselected pair. The parameters of these two estimators are listed in ~\Tableref{sup:table:adv-tc}. 

We compare our method with autoencoder and variational autoencoder. Apart from the reconstruction loss \(\calL_{recons}\) and KL loss used in AE and VAE, we also take regression loss \(\calL_{color}\), classification loss \(\calL_{digit}\) into account. We illustrate the failure transfer case under the AE setup to show the difference without the total correlation term in Figure~\ref{fig:ae}. Both the color and digit are not successfully transferred from the source (bottom row) to the target (rightmost column).

\begin{figure}[t]
     \centering
     \begin{subfigure}[ht]{0.3\textwidth}
         \centering
         \includegraphics[width=\textwidth]{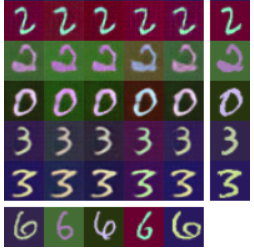}
         \caption{color transfer}
         \label{fig:y equals x}
     \end{subfigure}
    \hspace{10pt}
     \begin{subfigure}[ht]{0.3\textwidth}
         \centering
         \includegraphics[width=\textwidth]{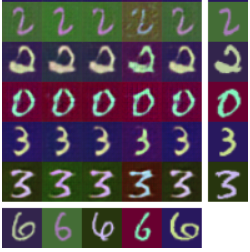}
         \caption{digit transfer}
         \label{fig:three sin x}
     \end{subfigure}
        \caption{Failure case of disentangle transfer, using autoencoder without the total correlation constraint. The bottom row is the source image and rightmost column is the target image.}
        \label{fig:ae}
\end{figure}

\begin{table}[t]
\caption{Structure of the adversarial discriminator \(\calH(\cdot)\) and two total correlation estimators described in \Secref{sup:sec:tcmax}. The terms in brackets of Conv2d and DeConv2d are (input channel, output channel, filter size, stride size, zero padding size, bias included). The terms in the bracket of Linear are (input dimension, output dimension, bias included). BN means batch normalization.}
\centering
\begin{tabular}{@{}llll@{}}
\toprule
  & Discriminator                            & TC estimator 1                & TC estimator 2                \\ \midrule
0 & Conv2d(3, 32, 5, 1, 2, False), BN, ReLU  & Linear(128, 90, True), ReLU   & Linear(64, 64, True), ReLU    \\
1 & Conv2d(32, 64, 5, 1, 2, False), BN, ReLU & Linear(90, 90, True), ReLU    & Linear(64, 64, True), ReLU    \\
2 & Linear(4096, 1), ,                      & Linear(90, 1, True), Softplus & Linear(64, 1, True), Softplus \\ \bottomrule
\end{tabular}
\label{sup:table:adv-tc}
\end{table}

\paragraph{Hyper-parameters}
The total training epoch and batch size are 300 and 512. 
Learning rate is $5\mathrm{e}{-3}$ for encoder and decoder, $1\mathrm{e}{-4}$ for the adversarial discriminator, $1\mathrm{e}{-3}$ for the digit classifier and color regressor. All the optimizers are Adam. Since the final loss is the summation of each term, the weight of each term is one except for the color regressor loss, which is ten. We also add spectral normalization~\citep{miyato2018spectral} to the adversarial discriminator to further stabilize the adversarial training.

\begin{table}[t!]
\caption{Four templates used in PromptBERT-$\TC$ and PromptRoBERTa-$\TC$. [CLS],[MASK], [SEP] are special tokens used in BERT. [X] is the placeholder of the input sentence.}
\begin{adjustbox}{width=\textwidth,center}
\begin{tabular}{@{}lll@{}}
\toprule  & BERT  & RoBERTa \\ \midrule
Template 0 & {[}CLS{]} This sentence of "{[}X{]}" means{[}MASK{]}.{[}SEP{]}                    & {[}CLS{]} This sentence of '{[}X{]}' means{[}MASK{]}.{[}SEP{]}                    \\
Template 1 & {[}CLS{]} The sentence : "{[}X{]}" means {[}MASK{]}.{[}SEP{]}                     & {[}CLS{]} The sentence : '{[}X{]}' means {[}MASK{]}.{[}SEP{]}                     \\
Template 2 & {[}CLS{]} The sentence ' {[}X{]} ' has the same meaning with {[}MASK{]}.{[}SEP{]} & {[}CLS{]} The sentence ' {[}X{]} ' has the same meaning with {[}MASK{]}.{[}SEP{]} \\
Template 3 & {[}CLS{]} {[}MASK{]} has similar meaning with sentence : '{[}X{]}'.{[}SEP{]}      & {[}CLS{]} {[}MASK{]} has similar meaning with sentence : '{[}X{]}'.{[}SEP{]}      \\ \bottomrule
\end{tabular}
\end{adjustbox}
\label{sup:table:prompts}
\end{table}

\subsection{TC Minimization: Sentence Representation}

\paragraph{Training Setup}
We train the model with $10^6$ randomly sampled sentences from the English Wikipedia \citep{wikidump} dataset. 

\paragraph{Model Structure}
The model consists of two parts, $f$ is the transformer-based large pretrained language model and $g$ is a multiple layer perception with one hidden layer and ReLU activation function. The hidden neurons and output neurons are 256. The structure of the model follows \citet{gao2021simcse, DBLP:journals/corr/abs-2201-04337}.

\paragraph{Hyper-parameters}
Considering that \(f\) and \(g\) are initialized in different ways, we scale up the learning rate of \(g\). In detail, the learning rate and learning rate scale are $1\mathrm{e}{-5}$ and 10 for models with SimCSE-BERT, SimCSE-RoBERTa and PromptBERT as baselines, $5\mathrm{e}{-6}$ and 100 for the model with PromptRoBERTa as baselines. These hyper-parameters are selected according to the validation set. 
The batch size is 256. The maximum length of a sentence is limited to 32. The training epoch is one. The best model selected based on the performance of the validation set is applied to the testing set. The validation set is evaluated every 250 steps.

\paragraph{Augmentations}
Following the simple but effective data augmentation techniques in SimCSE, we get four augmentations by feeding the same input to the network four times and sample four dropout samples independently. 
Prompt-BERT also applies this dropout strategy, while they modify the inputs as well as the output feature representation neuron. 

SimCSE and PromptBERT both apply the InfoNCE with cosine similarity as the score function and use the large masked language pretrained model like BERT\citep{devlin2018bert} as the initialization of the feature encoder \(f(\cdot)\). 
The difference is how they construct the paired distribution \(p(x,y)\). 
SimCSE feeds the same sentence to the encoder twice and obtains different output embeddings by independently sampling two dropout masks of the encoder \(f(\cdot)\) ([CLS] representation is taken as the representation of the sentence). 
PromptBERT adapts two fixed cloze-style templates to augment the original sentence. 
For example, a template could be ``this sentence : [X] means [MASK].'', where [X] is the placeholder of the origin input sentence and [MASK] is a special placeholder token. The  representation of [MASK] token \(f(X)_{\text{[MASK]}}\) is donated as the sentence embedding.

In our total correlation estimation experiments, we augment the input sentence four times. More specifically, we sample four drop-outs for SimCSE and design four cloze-style templates for Prompt-BERT (~\Tableref{sup:table:prompts}). Our design of the template partially follows \citet{DBLP:journals/corr/abs-2201-04337}.


\end{document}


\appendix
\counterwithin{figure}{section}
\counterwithin{table}{section}

\section{PROOFS}
\begin{proof}[Proof of Theorem~\ref{thm:general_connection}]
Note that $\mX_{\calA} := (\vx_{i_1}, \vx_{i_2}, \dots, \vx_{i_m} )$ and $\mX_{\hat{\calA}} = \mX / \mX_{\calA}$. Denote $\mX_{\hat{\calA}}= (\vx_{j_1}, \vx_{j_2}, \dots, \vx_{j_l})$. Then
\begin{align*}
&\mathcal{TC}(\mX) =\bbE_{p(\mX)} \left[\log \frac{p(\vx_1, \vx_2, \dots, \vx_n)}{p(
 \vx_1) p(\vx_2)\dots p(\vx_n) }\right] \\
 = & \bbE_{p(\mX)} \left[\log \left( \frac{p(\mX_\calA)}{p(\vx_{i_1})p(\vx_{i_2})\dots p(\vx_{i_{m}})} \cdot \frac{p(\mX_{\hat{\calA}})}{ p(\vx_{j_1})p(\vx_{j_2})\dots p(\vx_{j_{l}})} \cdot \frac{p(\mX)}{p(\mX_{\calA}) p(\mX_{\hat{\calA}})}
 \right)\right] \\
 = &\TC(\mX_{\calA}) + \TC(\mX_{\hat{\calA}}) + \MI(\mX_{\calA};\mX_{\hat{\calA}}) 
\end{align*}
\end{proof}

\begin{proof}[Proof of Corollary~\ref{thm:line-like}]
We denote $\mX_{i:j} := (\vx_i, \vx_{i+1}, \dots, \vx_{j-1}, \vx_{j} )$. Note that
\begin{align*}
 \mathcal{TC}(\mX_{1:n}) =& \bbE_{p(\vx_1, \vx_2, \dots, \vx_n)} \left[\log \frac{p(\vx_1, \vx_2, \dots, \vx_n)}{p(\vx_1) p(\vx_2)\dots p(\vx_n) }\right] \\
 = & \bbE_{p(\vx_1, \vx_2, \dots, \vx_n)} \left[\log \left( \frac{p(\vx_1, \vx_2, \dots, \vx_{n-1}, \vx_n)}{p(\vx_1,\vx_2,\dots,\vx_{n-1}) p(\vx_{n})} \cdot \frac{p(\vx_1, \vx_2, \dots, \vx_{n-1})}{ p(\vx_1)p(\vx_2)\dots p(\vx_{n-1})} \right)\right] \\
 = &\MI(\vx_1,\vx_2,\dots, \vx_{n-1} ; \vx_n) + TC(\mX_{1:n-1}) \\
 = &\MI(\mX_{1:n-1} ; \vx_n) + \TC(\mX_{1:n-1})
\end{align*}
Similarly, 
\begin{equation}
    \TC(\mX_{1:n}) =\MI(\mX_{1:n-1} ; \vx_n) + \MI(\mX_{1:n-2}; \vx_{n-1}) + \TC(\mX_{1:n-2}) = \sum_{i=1}^{n-1} \MI(\mX_{1:i}; \vx_{i+1})
\end{equation}
\end{proof}

\begin{proof}[Proof of Theorem~\ref{thm:linear-prop-unbiased}]
First consider line-like TC estimator $\widehat{\TC}_\text{Line}[\hat{\MI}] (\mX) = \sum_{i=1}^{n-1} \hat{\MI}(\mX_{1:i}; \vx_{i+1})$.
If the MI estimator $\hat{\MI}$ is unbiased, by definition, $\bbE[\hat{\MI}(\mX_{1:i}; \vx_{i+1})] = \MI(\mX_{1:i}; \vx_{i+1})$. Taking expectation for the TC estimator, 
\begin{equation}
    \bbE [\widehat{\TC}_\text{Line}[\hat{\MI}] (\mX) ]= \sum_{i=1}^{n-1} \bbE[\hat{\MI}(\mX_{1:i}; \vx_{i+1})] = \sum_{i=1}^{n-1} \MI(\mX_{1:i}; \vx_{i+1}) = \TC(\mX), 
\end{equation}
which means that $\widehat{\TC}_\text{Line}[\hat{\MI}] (\mX)$ is unbiased. Similarly, we can show that $\widehat{\TC}_\text{Tree}[\hat{\MI}] (\mX)$ is unbiased. 
\end{proof}

\begin{proof}[Proof of Theorem~\ref{thm:linear-prop-consistent}]
For the convenience of notation, we show the proof with our $\widehat{\TC}_\text{Line}[\hat{\MI}]$ estimator. The proof can be easily applied to $\hat{\TC}_\text{Tree}[\hat{\MI}]$, since both $\hat{\TC}_\text{Line}[\hat{\MI}]$ and $\hat{\TC}_\text{Tree}[\hat{\MI}]$ are linear combination of MI terms. We denote $\hat{\MI}_{m} (\vx ;\vy)  = \hat{\MI} (\{(\vx^k, \vy^k)\}_{k=1}^m)$ and $\TCeld[\hat{\MI}]_m(\mX) = \TCeld[\hat{\MI}](\{\mX^k\}_{k=1}^m)$ as the estimators with $m$ samples $\{(\vx^k, \vy^k)\}_{k=1}^m  \sim p(\vx,\vy)$, and $\{\mX^k\}_{k=1}^m \sim p(\mX)$ respectively.

\textbf{Strong Consistency: } If $\hat{\MI}$ is a strong consistent estimator, by the Definition~\ref{def:strong-consistency}, $\forall \varepsilon > 0$, with a fix variable dimension $n \in \bbN_{+}$, for each variable pair $(\mX_{1:i}, \vx_{i+1})$ ($i=1,2,\dots,n-1)$,  $\exists M_i>0$, such that $\forall m > M_i$,
\begin{equation}\label{eq:appendix-mi-consistent}
\bbP\Big\{\left|\hat{\MI}_{m}(\mX_{1:i}; \vx_{i+1}) - \MI(\mX_{1:i} ; \vx_{i+1})\right| \leq \frac{\varepsilon}{n}\Big\} = 1.    
\end{equation}
%
Let $\bar{M} = \max \{M_1, M_2, \dots, M_{n-1} \}$, then $\forall m > \bar{M}$, 
\begin{align}
&\bbP\Big\{\left|\TCeld_\text{Line}[\hat{\MI}]_m (\mX) - \TC(\mX)\right| \leq \varepsilon \Big\} \label{eq:prob-TC-consistent-form} \\
= & \bbP \Big\{ \left| \sum_{i=1}^{n-1} \hat{\MI}_m(\mX_{1:i}; \vx_{i+1})- \sum_{i=1}^{n-1}  \MI(\mX_{1:i}; \vx_{i+1})\right| \leq \varepsilon \Big\} \label{eq:prob-TC-leq-eps} \\  
\geq& \bbP\left[ \bigcap_{i=1}^{n-1} \Big\{\left|\hat{\MI}_{m}(\mX_{1:i}; \vx_{i+1}) - \MI(\mX_{1:i} ; \vx_{i+1})\right| \leq \frac{\varepsilon}{n}\Big\}\right]. \label{eq:prob-each-mi-leq-eps}
\end{align}
The inequality between \eqref{eq:prob-TC-leq-eps} and \eqref{eq:prob-each-mi-leq-eps} is because the condition in \eqref{eq:prob-TC-leq-eps} is sufficient to deduce the condition in \eqref{eq:prob-TC-leq-eps}:
\begin{align*}
    &\left| \sum_{i=1}^{n-1} \hat{\MI}_m(\mX_{1:i}; \vx_{i+1})- \sum_{i=1}^{n-1}  \MI(\mX_{1:i}; \vx_{i+1})\right|  \\ \leq & \sum_{i=1}^{n-1}\left|\hat{\MI}_{m}(\mX_{1:i}; \vx_{i+1}) - \MI(\mX_{1:i} ; \vx_{i+1})\right| \leq \frac{n-1}{n} \varepsilon < \varepsilon.
\end{align*}
Denote event $\calB_i = \Big\{\left|\hat{\MI}_{m}(\mX_{1:i}; \vx_{i+1}) - \MI(\mX_{1:i} ; \vx_{i+1})\right| \leq \frac{\varepsilon}{n}\Big\}$, by \eqref{eq:appendix-mi-consistent}, $\bbP[\calB_i] = 1$.
Consider the union $\calB_i \cup \calB_j$, we have:
\begin{equation}
    1 = \bbP[\calB_i] \leq \bbP[\calB_i \cup \calB_j] = \bbP[\calB_i] + \bbP[\calB_j] - \bbP[\calB_i \cap \calB_j] = 2 - \bbP[\calB_i \cap \calB_j] \leq 1,
\end{equation}
which means $\bbP[\calB_i \cap \calB_j] = 1$. Iteratively applying this conclusion, we know $\bbP[\cap_{i=1}^{n-1} \calB_i] = 1$. Therefore, $\bbP\left[ \bigcap_{i=1}^{n-1} \Big\{\left|\hat{\MI}_{m}(\mX_{1:i}; \vx_{i+1}) - \MI(\mX_{1:i} ; \vx_{i+1})\right| \leq \frac{\varepsilon}{n}\Big\}\right] = \bbP[\cap_{i=1}^{n-1} \calB_i] = 1$. Combining with \eqref{eq:prob-TC-consistent-form} and \eqref{eq:prob-each-mi-leq-eps}, we conclude that $\forall \varepsilon$, $\exists \hat{M}$, such that $\forall m > \hat{M}$, $\left|\TCeld_\text{Line}[\hat{\MI}]_m (\mX) - \TC(\mX)\right| \leq \varepsilon$ almost surely, which supports $\TCeld_\text{Line}[\hat{\MI}]$ is strongly consistent.

\textbf{Consistency: } If $\hat{\MI}$ is a consistent estimator, by the Definition~\ref{def:consistency}, $\forall \varepsilon>0$ and $\sigma >0$, with a fixed variable dimension $n\in \bbN_{+}$, for each variable pair $(\mX_{1:i}, \vx_{i+1})$ ($i=1,2,\dots,n-1$), $\exists M_i $, such that $\forall m > M_i$, 
\begin{equation}
\bbP \Big\{\left|\hat{\MI}_m(\mX_{1:i}; \vx_{i+1}) - \MI(\mX_{1:i}; \vx_{i+1}) \right| < \frac{\varepsilon}{n} \Big\} > 1- \frac{\sigma}{n}.
\end{equation}

Let $\hat{M} = \max\{M_1, M_2, \dots, M_{n-1}\}$, then $\forall m > \hat{M}$, similar to \eqref{eq:prob-TC-consistent-form}, \eqref{eq:prob-TC-leq-eps}, and \eqref{eq:prob-each-mi-leq-eps},
\begin{align}
&\bbP\Big\{\left|\TCeld_\text{Line}[\hat{\MI}]_m (\mX) - \TC(\mX)\right| < \varepsilon \Big\} \label{eq:prob-TC-w-consistent-form} \\
= & \bbP \Big\{ \left| \sum_{i=1}^{n-1} \hat{\MI}_m(\mX_{1:i}; \vx_{i+1})- \sum_{i=1}^{n-1}  \MI(\mX_{1:i}; \vx_{i+1})\right| <\varepsilon \Big\} \label{eq:prob-TC-w-leq-eps} \\  
\geq& \bbP\left[ \bigcap_{i=1}^{n-1} \Big\{\left|\hat{\MI}_{m}(\mX_{1:i}; \vx_{i+1}) - \MI(\mX_{1:i} ; \vx_{i+1})\right| < \frac{\varepsilon}{n}\Big\}\right]. \label{eq:prob-w-each-mi-leq-eps}
\end{align}
Denote $\calB_i = \Big\{\left|\hat{\MI}_{m}(\mX_{1:i}; \vx_{i+1}) - \MI(\mX_{1:i} ; \vx_{i+1})\right| < \frac{\varepsilon}{n}\Big\}$. Since $1 \geq \bbP[\calB_i \cup \calB_j] = \bbP[\calB_i] + \bbP[\calB_j] - \bbP[\calB_i \cap \calB_j]$, we have $\bbP[\calB_i \cap \calB_j] \geq \bbP[\calB_i] + \bbP[\calB_j] - 1  > (1 -\frac{\sigma}{n}) + (1- \frac{\sigma}{n}) - 1 = 1 - \frac{2}{n}\sigma$. Similarly, we can obtain $\bbP[\calB_i \cap \calB_j \cap \calB_k] \geq \bbP[\calB_i \cap \calB_j] + \bbP[\calB_k] -1 \geq (1-\frac{2}{n}\sigma ) +(1-\frac{\sigma}{n}) = 1 - \frac{3}{n}\sigma$ and $\bbP\{\left|\TCeld_\text{Line}[\hat{\MI}]_m (\mX) - \TC(\mX)\right| < \varepsilon \} \geq \bbP\left[\cap_{i=1}^{n-1}\calB_i\right] \geq 1 - \sigma$.
Therefore, $\forall \varepsilon > 0$, $\lim_{m\rightarrow \infty} \bbP\{|\TCeld_\text{Line}[\hat{\MI}]_m(\mX) - \TC(\mX)| \geq \varepsilon\} = 0$.  $\TCeld_\text{Line}[\hat{\MI}]$ is consistent.
\end{proof}

\begin{proof}[Proof of Corollary~\ref{thm:TC-infonce-and-nwj-consistent}]
By Theorem~\ref{thm:linear-prop-consistent}, only need to show both $\hat{\MI}_\text{InfoNCE}$ and $\hat{\MI}_\text{NWJ}$ are strongly consistent. Inspired by the proof of Theorem~2 in \citet{belghazi2018mutual}, we only need to proof the following two lemmas:

\begin{lemma}\label{thm:lemma1}
For any $\eta > 0$, there exists a feedforward score network function $\hat{\phi}: \Omega \rightarrow \bbR$ such that $|\MI(\vx,\vy)- \hat{\MI}[\hat{\phi}] | \leq \eta$, where $\hat{\MI} \in \{ \hat{\MI}_\text{InfoNCE}, \hat{\MI}_\text{NWJ}\}$, $\hat{\MI}_\text{InfoNCE}[\hat{\phi}] =  \bbE_{p(\vx,\vy)}[\hat{\phi}(\vx,\vy)] - \bbE_{p(\vx)} [\log \bbE_{p(\vy)} [\exp\hat{\phi}(\vx, \vy)] ]$ and $\hat{\MI}_\text{NWJ}[\hat{\phi}] = \bbE_{p(\vx,\vy)} [\hat{\phi} (\vx,\vy) ] - \bbE_{p(\vx)p(\vy)} [\exp(\hat{\phi}(\vx,\vy) - 1 )]$
\end{lemma}

\begin{lemma}\label{thm:lemma2}
For any $\eta > 0$, let $\mathcal{H}$ be the family of functions $\phi: \Omega \rightarrow \bbR$ defined by a give network architecture. Assume the parameter $\theta$ of network $\phi$ are restricted to some compact domain $\Theta \subset \bbR^k$. Then there exists $N \in \bbN_{+}$, such that, $\forall m \geq N$, $|\hat{\MI}_m(\vx,\vy) - \sup_{\phi \in  \calH} \hat{\MI}[\phi]| \leq \eta$ with probability one. Here $\hat{\MI} \in \{ \hat{\MI}_\text{InfoNCE}, \hat{\MI}_\text{NWJ}\}$.
\end{lemma}

To proof Lemma~\ref{thm:lemma1}, for NWJ, we select function $\phi^*(\vx, \vy) = 1 + \log \frac{p(\vx,\vy)}{p(\vx)p(\vy)}$. Then $\bbE_{p(\vx,\vy)}[\hat{\MI}_\text{InfoNCE}[\phi^*] ] = \MI(\vx,\vy) $. The difference can be calculated as 
\begin{equation} \label{eq:appendix-decompose-mi-NWJ}
\MI(\vx, \vy) - \hat{\MI}_\text{NWJ}[\phi] = \bbE_{p(\vx, \vy)} [\phi^* - \phi] - \exp(-1) \bbE_{p(\vx)p(\vy)} [\exp(\phi^*) - \exp(\phi)].
\end{equation}
The right-hand side of \eqref{eq:appendix-decompose-mi-NWJ} has the same form as equation~(25) in \citet{belghazi2018mutual}, except a coefficient $\exp(-1)$ for the second term. Therefore, we can exactly follow the proof of Section~6.2.1 of \citet{belghazi2018mutual} to prove our Lemma~\ref{thm:lemma1} for the NWJ estimator, with only adjusting the coefficient weight of term $|\phi^* -\phi|$ and term $|\exp(\phi^*) - \exp(\phi)|$.

For InfoNCE, we select $\phi^*(\vx, \vy) =\log p(\vx|\vy) $, so that $\bbE_{p(\vx,\vy)}[\hat{\MI}_\text{InfoNCE}[\phi^*]] = \MI(\vx; \vy)$.
The difference can be written as
\begin{align}
    \MI(\vx; \vy) - \hat{\MI}_\text{InfoNCE}[\phi] =& \bbE_{p(\vx,\vy)} [\phi^* - \phi]  - \bbE_{p(\vx)} [\log \bbE_{p(\vy)} [\exp(\phi^*)] -\log \bbE_{p(\vy)} [\exp(\phi)]].
\end{align}
Similarly to Section~6.2.1 in \citep{belghazi2018mutual}, we can consider the cases whether $\phi$ is bounded, then apply the universal approximation theorem to show Lemma~\ref{thm:lemma1} for InfoNCE.

To proof Lemma~\ref{thm:lemma2}, we denote $\bbP = p(\vx,\vy) , \bbQ = p(\vx)p(\vy)$, and $\bbP_m, \bbQ_m$ for emprical distribution with $m$ samples. For NWJ, we calculate the difference
\begin{equation}
    |\hat{\MI}_m(\vx,\vy) - \sup_{\phi \in  \calH} \hat{\MI}[\phi]| \leq \sup_{\phi \in \calH} |\bbE_{\bbP} [\phi] - \bbE_{\bbP_m} [\phi] | + \exp(-1)  \sup_{\phi \in \calH} |\bbE_{\bbQ} [\exp(\phi)] - \bbE_{\bbQ_m}[\exp(\phi)]|,
\end{equation} where the second term of right-hand side has the same form as equation~(32) in Secion~6.2.2 in \citet{belghazi2018mutual}. Therefore, we can follow the proof in Secion~6.2.2 of \citet{belghazi2018mutual} to prove Lemma~2 for NWJ. Similarly, the similar proving process can be applied to InfoNCE.
\end{proof}

\section{EXPERIMENTAL DETAILS}
All experiments are executed on a single NVIDIA Titan Xp GPU with 12,196M memory.


\subsection{TC estimation}
\paragraph{Experiment Design}
Mutual information between two multivariate Guassian distributions $X_1 \sim \mathcal{N}(0,\Sigma_1)$, $X_2 \sim \mathcal{N}(0,\Sigma_2)$ is 
$\frac{1}{2}\log \frac{\det(\Sigma_1)\det(\Sigma_2)}{\det(\Sigma)}$,
where $\Sigma$ is the covariance matrix of the joint distribution \([X_1,X_2]\). 

In our setting, our training samples are sampled from a joint distribution with \(n\) variables \([\vX_{i}, i\in \{0,1,2,3,n-1\}]\), the marginal distribution of each variable have zero mean and identity covariance matrix with dimension \(d\). Therefore, the determinant of covariance matrix of single variable \(\Sigma_i\) is 1 and the mutual information between two variables \(i,j\) is \(-\frac{1}{2}\log \det(\Sigma_{ij})\). The total correlation among variables are \(-\frac{1}{2}\log \det(\Sigma)\). To prove this, we use the idea of line-like structure. Assume that the total correlation of first \(k\) variables are  \(-\frac{1}{2}\log \det(\Sigma_{[:k]})\), the total correlation of the first \(k+1\) variables are 
\begin{align}
    -\frac{1}{2}\log \det(\Sigma_{[:{k+1}]}) = -\frac{1}{2}\log \det(\Sigma_{[:k]}) + \frac{1}{2}\log \frac{\det(\Sigma_{[:k]})}{\det(\Sigma_{[:{k+1}]})},
\end{align}
where \(\Sigma_{:k}\) is the covariance matrix of the first \(k\) variables. 


In our proof-of-concept experiments, we set \(n = 4, d = 10\).
The covariance matrix is
\[
\begin{bmatrix}
\sI_{d} & \sigma \sI_{d} & 0 & 0 \\
\sigma \sI_{d} & \sI_{d} & 0 & 0 \\
0 & 0 & \sI_{d} & \sigma \sI_{d} \\
0 & 0 &  \sigma \sI_{d} & \sI_{d}  
\end{bmatrix}
\]
The total correlation under such a design is \(-d\log(1-\sigma^2)\).

As we mentioned in the paper,  we can adjust the correlation coefficients (non-diagonal elements) \(\sigma\) to set the ground truth TC values in the set \(\{2.0, 4.0, 6.0, 8.0, 10.0\}\). 


\paragraph{Hyper-parameters}
All MI lower bounds require the learning of a value function $f(\vx,\vy)$; the CLUB upper bound requires the learning of a network approximation $q_\theta(\vy| \vx)$. To make a fair comparison, we set the value function and the neural approximation with one hidden layer and the same hidden units. For the multivariate Gaussian setup, the number of hidden units is $20$.
On the top of the hidden layer output, we add the ReLU activation function.
The learning rate for all estimators is set to $1\times 10^{-4}$. 
\paragraph{Bias, Variance, Mean Squared Error}
Figure~\ref{fig:line-like-mse}~\ref{fig:tree-like-mse} show the bias, variance and mean squared error using different mutual information estimators. The explanation to this figure is shown in the main paper.
\begin{figure}[t]
\begin{minipage}{\textwidth}
    \includegraphics[width=0.9\textwidth]{figs/bias_variance_3d.png}
    \vspace{-2mm}
    \caption{Bias, variance and MSE of \textbf{line-like} TC estimators.}
    \label{fig:line-like-mse}
\end{minipage}
\begin{minipage}{\textwidth}
    \includegraphics[width=0.9\textwidth]{figs/bias_variance_4d.png}
    \vspace{-4mm}
    \caption{Bias, variance and MSE of \textbf{tree-like} TC estimators.}
    \label{fig:tree-like-mse}
\end{minipage}
\end{figure}

\paragraph{Non-Parametric Methods}

Since the probability can be directly estimated using non-parametric methods, we can also use non-parametric methods to estimate the total correlation directly. Here we compare with KDE, \textit{k}-NN based methods \citep{kraskov2004estimating, pal2010estimation} and its variant~\citep{gao2018demystifying}. As shown table~\Tableref{table:app:non}, the non-parametric methods can only estimate the total correlation decently when the dimension of input $n$ is low. Note that the experiment setting is exactly the same as the main paper except for dimension. 

\begin{table}[t]
\caption{Comparison between ours(Tree-based CLUB) with other non-parametric methods under two circumstance with different data dimensionality ($n = 2$ and $n = 10$. The table shows the absolute error between predicted and ground truth. \textbf{Bold} means the minimal error and the best predication.}
\begin{tabular}{@{}lllllllllll@{}}
\toprule
                                                  & \multicolumn{5}{c}{$n = 2$}                                                                                                        & \multicolumn{5}{c}{$n = 10$}                                                                                           \\ \midrule
\multicolumn{1}{l|}{Total Correlation}            & \multicolumn{1}{c}{2} & \multicolumn{1}{c}{4} & \multicolumn{1}{c}{6} & \multicolumn{1}{c}{8} & \multicolumn{1}{c|}{10}            & \multicolumn{1}{c}{2} & \multicolumn{1}{c}{4} & \multicolumn{1}{c}{6} & \multicolumn{1}{c}{8} & \multicolumn{1}{c}{10} \\ \midrule
\multicolumn{1}{l|}{KNN (\citep{kraskov2004estimating, pal2010estimation}} & 0.08                  & 0.29                  & 0.66                  & 1.29                  & \multicolumn{1}{l|}{2.27}          & 1.34                  & 2.66                  & 4.05                  & 5.47                  & 6.98                   \\
\multicolumn{1}{l|}{Bias-improved KNN (\citep{gao2018demystifying})} & 0.2                   & 0.59                  & 1.24                  & 2.21                  & \multicolumn{1}{l|}{3.54}          & 1.48                  & 2.97                  & 4.51                  & 6.07                  & 7.67                   \\
\multicolumn{1}{l|}{Kernel Density Estimation}    & 1.44                  & 1.40                  & 1.40                  & 1.41                  & \multicolumn{1}{l|}{\textbf{1.37}} & 9.02                  & 8.89                  & 8.98                  & 9.00                  & 8.97                   \\
\multicolumn{1}{l|}{Ours}                         & \textbf{0.03}         & \textbf{0.04}         & \textbf{0.55}         & \textbf{0.88}         & \multicolumn{1}{l|}{1.63}          & \textbf{0.23}         & \textbf{0.37}         & \textbf{0.22}         & \textbf{1.03}         & \textbf{1.96}          \\ \bottomrule
\end{tabular}
\label{table:app:non}
\end{table}

Considering MINE~\citep{belghazi2018mutual} has already performs better than \textit{k}-NN based methods~\citep{kraskov2004estimating, pal2010estimation} and MINE is one of our baselines, our performance should perform better. This is also reflected in the table. Meanwhile, \textit{k}-Nearest Neighbor estimation method focuses on total correlation estimation only with non-differentiable operation. The estimated value is devoted to variable independence and correlation analysis. While in our neural network-based methods, we calculate the total correlation in a derivative way and get meaningful gradient information for efficient back-propagation. The results are shown in our disentangle and representation learning experiments.

\subsection{TC Maximization: Disentanglement}
\label{sup:sec:tcmax}
\paragraph{Model Design}


\begin{table}[t]
\caption{Structure of the encoder \(\calE(\cdot)\), decoder \(\calD(\cdot)\) described in \Secref{sup:sec:tcmax}. The terms in brackets of Conv2d and DeConv2d are (input channel, output channel, filter size, stride size, zero padding size, bias included). IN means instance normalization. DeConv2d represents the deconvolution operator. }
\centering
\begin{tabular}{@{}lll@{}}
\toprule
  & Encoder                                         & Decoder                                           \\ \midrule
0 & Conv2d(3, 64, 4, 2, 1, False), IN, LeakyReLU    & DeConv2d(128, 512, 2, 1, 0, True)                 \\
1 & Conv2d(64, 128, 4, 2, 1, False), IN, LeakyReLU  & DeConv2d(512, 256, 4, 2, 1, False), IN, LeakyReLU \\
2 & Conv2d(128, 256, 4, 2, 1, False), IN, LeakyReLU & DeConv2d(256, 128, 4, 2, 1, False), IN, LeakyReLU \\
3 & Conv2d(256, 512, 4, 2, 1, False), IN, LeakyReLU & DeConv2d(128, 64, 4, 2, 1, False), IN, LeakyReLU  \\
4 & Conv2d(512, 128, 2, 1, 0, True), ,              & DeConv2d(256, 128, 4, 2, 1, False), ,Tanh         \\ \bottomrule
\end{tabular}
\label{sup:table:encoder-decoder}
\end{table}
\begin{table}[t]
\caption{Structure of classifier \(\calF_d(\cdot)\) and regressor \(\calF_c(\cdot)\) described in \Secref{sup:sec:tcmax}. The terms in the bracket of Linear are (input dimension, output dimension, bias included). }
\centering
\begin{tabular}{@{}lll@{}}
\toprule
  & Color Regressor              & Digit Classifier           \\ \midrule
0 & Linear(32, 16, True), ReLU   & Linear(32, 16, True), ReLU \\
1 & Linear(16, 3, True), Sigmoid & Linear(16, 10, True),    
\\ \bottomrule
\end{tabular}
\label{sup:table:digit-color}
\end{table}

The dimension of latent space \(\vz , \vz_{color}, \vz_{digit}\) is \(128, 32, 32 \). The structures of each model are shown in ~\Tableref{sup:table:encoder-decoder}~\ref{sup:table:digit-color}~\ref{sup:table:adv-tc}. We use TC-InfoNCE as our total correlation estimator. Note that when the number of variables is three. There is no difference between tree-based and line-based methods. Both of which require two mutual information estimators, one for mutual information estimation of any pair, the other is to capture the mutual information between the left one and the preselected pair. The parameters of these two estimators are listed in ~\Tableref{sup:table:adv-tc}. 

We compare our method with autoencoder and variational autoencoder. Apart from the reconstruction loss \(\calL_{recons}\) and KL loss used in AE and VAE, we also take regression loss \(\calL_{color}\), classification loss \(\calL_{digit}\) into account. We illustrate the failure transfer case under the AE setup to show the difference without the total correlation term in Figure~\ref{fig:ae}. Both the color and digit are not successfully transferred from the source (bottom row) to the target (rightmost column).

\begin{figure}[t]
     \centering
     \begin{subfigure}[ht]{0.3\textwidth}
         \centering
         \includegraphics[width=\textwidth]{figs/ae_color.png}
         \caption{color transfer}
         \label{fig:y equals x}
     \end{subfigure}
    \hspace{10pt}
     \begin{subfigure}[ht]{0.3\textwidth}
         \centering
         \includegraphics[width=\textwidth]{figs/ae_digit.png}
         \caption{digit transfer}
         \label{fig:three sin x}
     \end{subfigure}
        \caption{Failure case of disentangle transfer, using autoencoder without the total correlation constraint. The bottom row is the source image and rightmost column is the target image.}
        \label{fig:ae}
\end{figure}

\begin{table}[t]
\caption{Structure of the adversarial discriminator \(\calH(\cdot)\) and two total correlation estimators described in \Secref{sup:sec:tcmax}. The terms in brackets of Conv2d and DeConv2d are (input channel, output channel, filter size, stride size, zero padding size, bias included). The terms in the bracket of Linear are (input dimension, output dimension, bias included). BN means batch normalization.}
\centering
\begin{tabular}{@{}llll@{}}
\toprule
  & Discriminator                            & TC estimator 1                & TC estimator 2                \\ \midrule
0 & Conv2d(3, 32, 5, 1, 2, False), BN, ReLU  & Linear(128, 90, True), ReLU   & Linear(64, 64, True), ReLU    \\
1 & Conv2d(32, 64, 5, 1, 2, False), BN, ReLU & Linear(90, 90, True), ReLU    & Linear(64, 64, True), ReLU    \\
2 & Linear(4096, 1), ,                      & Linear(90, 1, True), Softplus & Linear(64, 1, True), Softplus \\ \bottomrule
\end{tabular}
\label{sup:table:adv-tc}
\end{table}

\paragraph{Hyper-parameters}
The total training epoch and batch size are 300 and 512. 
Learning rate is $5\mathrm{e}{-3}$ for encoder and decoder, $1\mathrm{e}{-4}$ for the adversarial discriminator, $1\mathrm{e}{-3}$ for the digit classifier and color regressor. All the optimizers are Adam. Since the final loss is the summation of each term, the weight of each term is one except for the color regressor loss, which is ten. We also add spectral normalization~\citep{miyato2018spectral} to the adversarial discriminator to further stabilize the adversarial training.

\begin{table}[t!]
\caption{Four templates used in PromptBERT-$\TC$ and PromptRoBERTa-$\TC$. [CLS],[MASK], [SEP] are special tokens used in BERT. [X] is the placeholder of the input sentence.}
\begin{adjustbox}{width=\textwidth,center}
\begin{tabular}{@{}lll@{}}
\toprule  & BERT  & RoBERTa \\ \midrule
Template 0 & {[}CLS{]} This sentence of "{[}X{]}" means{[}MASK{]}.{[}SEP{]}                    & {[}CLS{]} This sentence of '{[}X{]}' means{[}MASK{]}.{[}SEP{]}                    \\
Template 1 & {[}CLS{]} The sentence : "{[}X{]}" means {[}MASK{]}.{[}SEP{]}                     & {[}CLS{]} The sentence : '{[}X{]}' means {[}MASK{]}.{[}SEP{]}                     \\
Template 2 & {[}CLS{]} The sentence ' {[}X{]} ' has the same meaning with {[}MASK{]}.{[}SEP{]} & {[}CLS{]} The sentence ' {[}X{]} ' has the same meaning with {[}MASK{]}.{[}SEP{]} \\
Template 3 & {[}CLS{]} {[}MASK{]} has similar meaning with sentence : '{[}X{]}'.{[}SEP{]}      & {[}CLS{]} {[}MASK{]} has similar meaning with sentence : '{[}X{]}'.{[}SEP{]}      \\ \bottomrule
\end{tabular}
\end{adjustbox}
\label{sup:table:prompts}
\end{table}

\subsection{TC Minimization: Sentence Representation}

\paragraph{Training Setup}
We train the model with $10^6$ randomly sampled sentences from the English Wikipedia \citep{wikidump} dataset. 

\paragraph{Model Structure}
The model consists of two parts, $f$ is the transformer-based large pretrained language model and $g$ is a multiple layer perception with one hidden layer and ReLU activation function. The hidden neurons and output neurons are 256. The structure of the model follows \citet{gao2021simcse, DBLP:journals/corr/abs-2201-04337}.

\paragraph{Hyper-parameters}
Considering that \(f\) and \(g\) are initialized in different ways, we scale up the learning rate of \(g\). In detail, the learning rate and learning rate scale are $1\mathrm{e}{-5}$ and 10 for models with SimCSE-BERT, SimCSE-RoBERTa and PromptBERT as baselines, $5\mathrm{e}{-6}$ and 100 for the model with PromptRoBERTa as baselines. These hyper-parameters are selected according to the validation set. 
The batch size is 256. The maximum length of a sentence is limited to 32. The training epoch is one. The best model selected based on the performance of the validation set is applied to the testing set. The validation set is evaluated every 250 steps.

\paragraph{Augmentations}
Following the simple but effective data augmentation techniques in SimCSE, we get four augmentations by feeding the same input to the network four times and sample four dropout samples independently. 
Prompt-BERT also applies this dropout strategy, while they modify the inputs as well as the output feature representation neuron. 

SimCSE and PromptBERT both apply the InfoNCE with cosine similarity as the score function and use the large masked language pretrained model like BERT\citep{devlin2018bert} as the initialization of the feature encoder \(f(\cdot)\). 
The difference is how they construct the paired distribution \(p(x,y)\). 
SimCSE feeds the same sentence to the encoder twice and obtains different output embeddings by independently sampling two dropout masks of the encoder \(f(\cdot)\) ([CLS] representation is taken as the representation of the sentence). 
PromptBERT adapts two fixed cloze-style templates to augment the original sentence. 
For example, a template could be ``this sentence : [X] means [MASK].'', where [X] is the placeholder of the origin input sentence and [MASK] is a special placeholder token. The  representation of [MASK] token \(f(X)_{\text{[MASK]}}\) is donated as the sentence embedding.

In our total correlation estimation experiments, we augment the input sentence four times. More specifically, we sample four drop-outs for SimCSE and design four cloze-style templates for Prompt-BERT (~\Tableref{sup:table:prompts}). Our design of the template partially follows \citet{DBLP:journals/corr/abs-2201-04337}.


\bibliography{reference}
\bibliographystyle{plainnat}
\clearpage


%

%

\onecolumn
\aistatstitle{Instructions for Paper Submissions to AISTATS 2023: \\
Supplementary Materials}

\section{FORMATTING INSTRUCTIONS}

To prepare a supplementary pdf file, we ask the authors to use \texttt{aistats2023.sty} as a style file and to follow the same formatting instructions as in the main paper.
The only difference is that the supplementary material must be in a \emph{single-column} format.
You can use \texttt{supplement.tex} in our starter pack as a starting point, or append the supplementary content to the main paper and split the final PDF into two separate files.

Note that reviewers are under no obligation to examine your supplementary material.

\section{MISSING PROOFS}

The supplementary materials may contain detailed proofs of the results that are missing in the main paper.

\subsection{Proof of Lemma 3}

\textit{In this section, we present the detailed proof of Lemma 3 and then [ ... ]}

\section{ADDITIONAL EXPERIMENTS}

If you have additional experimental results, you may include them in the supplementary materials.

\subsection{The Effect of Regularization Parameter}

\textit{Our algorithm depends on the regularization parameter $\lambda$. Figure 1 below illustrates the effect of this parameter on the performance of our algorithm. As we can see, [ ... ]}

\vfill